\def\0{{\mathbf 0}}
\def\1{{\mathbf 1}}
\def\s{{\mathbf s}}
\def\t{{\mathbf t}}
\def\v{{\mathbf v}}
\def\x{{\mathbf x}}
\def\y{{\mathbf y}}
\def\z{{\mathbf z}}
\def\c{{\mathbf c}}
\def\A{{\mathbf A}}
\def\B{{\mathbf B}}
\def\D{{\mathbf D}}
\def\L{{\mathbf L}}
\def\N{{\mathbf N}}
\def\A{{\mathbf A}}
\def\P{{\mathbf P}}
\def\Q{{\mathbf Q}}
\def\X{{\mathbf X}}
\def\Y{{\mathbf Y}}
\def\Z{{\mathbf Z}}
\def\W{{\mathbf W}}
\def\e{{\mathbf e}}
\def\I{{\mathbf I}}
\def\cG{{\mathcal G}}
\def\cL{{\mathcal L}}
\def\cS{{\mathcal S}}
\def\cV{{\mathcal V}}
\def\cE{{\mathcal E}}
\def\cO{{\mathcal O}}
\def\cR{{\mathcal R}}
\def\bphi{{\pmb{\phi}}}
\def\bpsi{{\pmb{\psi}}}
\def\bphi{{\pmb{\phi}}}
\def\bpsi{{\pmb{\psi}}}
\DeclareMathOperator*{\argmax}{argmax}
\newtheorem{lemma}{\textbf{Lemma}}
\newtheorem{theorem}{\textbf{Theorem}}
\newtheorem{proposition}{\textbf{Proposition}}
\newtheorem{corollary}{\textbf{Corollary}}
\newcommand{\red}[1] {\textcolor[rgb]{1.0,0.0,0.0}{{#1}}}
\begin{document}
\title{Graph Sampling for Matrix Completion Using Recurrent Gershgorin Disc Shift}
\author{
\IEEEauthorblockN{Fen Wang, Yongchao Wang, \emph{Member, IEEE}, {Gene Cheung, \emph{Senior Member, IEEE} and Cheng Yang, \emph{Member, IEEE}}}
\renewcommand{\baselinestretch}{1.0}
\thanks{F. Wang conducted this work during her visit to York University under the scholarship from China Scholarship Council. \emph{(Corresponding author: Yongchao Wang and Gene Cheung.)}}
\thanks{F. Wang and Y. Wang are with State Key Laboratory of ISN, Xidian University, Xi'an 710071, Shaanxi, China (e-mail: fenwang@stu.xidian.edu.cn; ychwang@mail.xidian.edu.cn).}
\thanks{{G. Cheung and C. Yang are with the department of EECS, York University, 4700 Keele Street, Toronto, M3J 1P3, Canada (e-mail:genec@yorku.ca; cyang@eecs.yorku.ca).}}
}
\maketitle
%
\begin{abstract}
Matrix completion algorithms fill missing entries in a large matrix given a subset of observed samples.
However, how to best pre-select informative matrix entries given a sampling budget is largely unaddressed. 
In this paper, we propose a fast sample selection strategy for matrix completion from a graph signal processing perspective. 
Specifically, we first regularize the matrix reconstruction objective using a dual graph signal smoothness prior, resulting in a system of linear equations for solution.
We then select appropriate samples to maximize the smallest eigenvalue $\lambda_{\min}$ of the coefficient matrix, thus maximizing the stability of the linear system. 
To efficiently solve this combinatorial problem, we derive a greedy sampling strategy, leveraging on Gershgorin circle theorem, that iteratively selects one sample (equivalent to shifting one Gershgorin disc) at a time corresponding to the largest magnitude entry in the first eigenvector of a modified graph Laplacian matrix.  
Our algorithm benefits computationally from warm start as the first eigenvectors of incremented Laplacian matrices are computed recurrently for more samples. 
To achieve computation scalability when sampling large matrices, we further rewrite the  coefficient matrix as a sum of two separate components, each of which exhibits block-diagonal structure that we exploit for alternating block-wise sampling. 
Extensive experiments on both synthetic and real-world datasets show that our graph sampling algorithm substantially outperforms existing sampling schemes for matrix completion and reduces the completion error, when combined with a range of modern matrix completion algorithms. 
\end{abstract}
%
\begin{IEEEkeywords}
Graph sampling, matrix completion, Gershgorin circle theorem, graph Laplacian regularization
\end{IEEEkeywords}


%
\section{Introduction}
Big data means not only that the volume of acquired data is large, but the dimensionality of the dataset is also considerable. 
\textit{Matrix completion} (MC) \cite{liu2017MC} is an example of  this ``curse of dimensionality" problem, where two large dimensional item sets (\textit{e.g.}, viewers and movies in the famed Netflix challenge) are correlated within and across sets. 
Specifically, given a small subset of pairwise observations (\textit{e.g.}, viewers' ratings on movies), an MC algorithm reconstructs missing entries in the target matrix signal. 
Many MC algorithms have been devised using different priors to regularize the under-determined inverse problem, such as low rank of the target matrix \cite{candes2009exact} and graph signal smoothness priors \cite{EPFLgraph}. 
See \cite{candes10} for an introductory exposition.

While {MC} has been investigated intensively, how to pre-select matrix entries to collect informative samples given a sampling budget is largely unaddressed.
This sampling problem is of practical concern for applications where sampling is expensive and/or time-consuming \cite{rubens2015active}  (\textit{e.g.}, requesting viewers to fill out movie surveys is cumbersome and costly). 
Conventionally, entries in matrix were collected based on their informative uncertainty computed using different methods \cite{2013activeMC,adaptive-sampling,huang2018active}, which are typically computation-intensive. 
There exist fast sampling strategies for coherent matrices that assigned each entry a probability for non-uniform random sampling \cite{LLS}. 
However, performance of random  selection schemes is in general inferior compared to their deterministic counterparts. 

Recently, rating matrix in a recommendation system was investigated from a \emph{graph signal processing} (GSP) perspective \cite{ortega18ieee}, where the target matrix signal was assumed to be bandlimited / smooth with respect to both the row (movies) and column (viewers) graphs (called  \textit{factor graphs}), \cite{huang2018rating,EPFLgraph}. 
Under this assumption, \cite{PG,tensor} identified a \emph{structured} set for MC {by sampling entries that are intersections of greedily selected rows and columns.}
However, the imposed structure severely limits the possible sampling patterns and thus is too restrictive to achieve a general sampling budget. 
More general sampling methods for single graphs are not applicable for MC due to their high complexities  on the \textit{product graph} (one large graph containing all matrix entries as nodes) \cite{anis16}. 

In contrast, in this paper we propose a fast \emph{unstructured} graph sampling method for MC. 
We first regularize the sampling objective with a \emph{dual graph smoothness prior}---a generalization of the well-known Tikhonov regularizer \cite{bishop1995training} to the graph signal domain---which was shown effective in completing missing matrix entries previously \cite{EPFLgraph,monti2017geometric}. 
This formulation leads to a system of linear equations for solution, which can be computed efficiently using known numerical linear algebra algorithms such as \textit{conjugate gradient} (CG) \cite{CG}. 
To maximize the stability of the linear system, we select samples to maximize the smallest eigenvalue $\lambda_{\min}$ of the coefficient matrix 
\footnote{Maximizing $\lambda_{\min}$ of a matrix is also known as the  \textit{E-optimality} criterion in optimal design of experiments  \cite{chen15sampling,experiments,Boyd}.}, 
which we show to also mean minimizing the upper bound of the reconstructed matrix signal's squared error. 

We propose to optimize the formulated objective greedily: select one node at a time such that the current sample set results in the largest $\lambda_{\min}$. 
However, in each greedy step, computing $\lambda_{\min}$ for all candidates and choosing the largest one would still be expensive.
Instead, leveraging on an insightful corollary of the Gershgorin circle theorem \cite{yuji}, we greedily select the sample corresponding to the largest magnitude entry in the first eigenvector of an augmented coefficient matrix, which also minimizes a related objective.  
Our algorithm benefits from warm start as the first eigenvectors of incrementally updated Laplacian matrices are computed recurrently during sampling using the well-known \textit{locally optimal block preconditioned conjugated gradient} (LOBPCG) method \cite{lobpcg}. 

To achieve computation scalability when sampling large matrices, we further partition the coefficient matrix into two matrices, each exhibiting attractive block-diagonal structure after permutation. 
We then propose an iterative sampling strategy that efficiently collects a pre-determined number of samples block-wise on smaller blocks alternately. 
Extensive experiments on synthetic and real-world datasets show that our proposed graph sampling methods  achieve much smaller RMSE than competing sampling schemes for MC  \cite{PG,akie2018eigenFREE,puy18}, when combined with a variety of state-of-the-art MC methods \cite{GRALS,berg2017graph,NMC}. 

The outline of the paper is as follows.
We first overview related works in graph sampling and active matrix completion in Section \ref{sec:related}.
We then derive our graph sampling objective for MC using the dual graph smoothness prior in Section \ref{sec:formulate}. 
In Section \ref{sec:sampling1}, we describe our sampling strategy via the Gershgorin circle theorem, and then we  propose an iterative block-wise sampling scheme for large matrices in Section \ref{sec:sampling2}. 
Finally, extensive experiments and conclusion are presented in Section \ref{sec:results} and \ref{sec:conclude}, respectively. 

\graphicspath{{figures/}}
\section{Related works}
\label{sec:related}
We first discuss related works in graph sampling.
Then we review some literature in conventional active matrix completion domain. 

\subsection{Subset Sampling of Graph Signals }

Subset sampling of graph signals is a fundamental problem in GSP \cite{ortega18ieee}: how to select a node subset in a graph for sampling such that the remaining samples in the signal can be reconstructed with high accuracy. 
Most existing works \cite{pesenson08,anis14,chen15sampling,anis16,wang18,shomorony14,puy18,PG} extended the notion of critical sampling (also known as \textit{Nyquist sampling}) in regular data kernels to bandlimited / smooth signals on graphs, where \textit{graph frequencies} are defined as {eigenvalues} of a graph variation operator like the graph Laplacian or adjacency matrix. 
Sampling methods in GSP can be broadly divided into two categories: i) deterministic schemes \cite{pesenson08,anis14,chen15sampling,anis16,wang18,chamon2018}, and ii) random schemes \cite{shomorony14,puy18}. 

Most deterministic schemes
\cite{chen15sampling,MFN,chamon2018} assumed that the graph signal is \textit{bandlimited}: its spectral coefficients are concentrated on a set of extreme eigenvectors. 
\cite{anis16} proposed a lightweight sampling method using the notion of \emph{spectral proxies}, which collected samples based on the first eigenvector of a submatrix in each greedy step.  
One recent work \cite{wang18} avoided eigenvector computation via Neumann series expansion, but required a large number of matrix series multiplications for accurate approximation. 
Recently, \cite{yuanchao2019ICASSP} proposed a sampling method based on Gershgorin disc alignment without any explicit eigen-decompositions.  
\cite{akie2018eigenFREE} also proposed an eigen-decomposition-free sampling method based on localization operator's coverage surface,
{but had no notion of global errors in its optimization objective}. 
However, those sampling methods cannot be directly applied to real-world MC problem because of  their high complexities on the corresponding product graph. 

In parallel, \cite{puy18} proposed a non-uniform random graph sampling scheme to select nodes  based on the notion of \textit{graph coherence}, such that each node was sampled with a designed probability.  
However, the performance of random sampling is generally inferior  compared to its deterministic competitors.

{To the best of our knowledge, we are the first to propose an unstructured and deterministic graph sampling strategy specifically for MC in the literature, with complexity roughly linear to the size of the {factor graphs}.  }

\subsection{{Active Matrix Completion}}
\label{subsec:relatedMC}

In the active learning literature, strategically selecting matrix entries is also called \emph{active matrix completion} \cite{activeMC2015}. 
Active matrix completion approaches can be categorized into two types: i) statistical approach \cite{2013activeMC,activeMC2015}, and ii) GSP approach \cite{PG,tensor}. 

Among works pursuing a statistical approach, \cite{activeMC2015} formulated an active learning objective for MC, which was tackled using collaborative filtering. 
In \cite{2013activeMC}, three different active querying strategies were proposed based on the reconstruction uncertainty of each entry; this  querying idea was further investigated in  \cite{ruchansky2015matrix,mak2017active,huang2018active}. 
However, those methods are generally computation-expensive since they must evaluate all candidates based on expected reconstructed error. 
Separately, adaptive sensing was proposed in \cite{adaptive-sampling} to select a subset of informative columns for MC with bounded complexity. 
Using the coherent property of matrix signals, \cite{LLS} proposed a fast \textit{leveraged score-based sampling} (LSS) to assign each matrix entry a sampling probability for non-uniform random sampling.
Nevertheless, the performance of random sampling is not comparable to the deterministic strategies.    

Recently, from a GSP viewpoint, the target matrix was interpreted as a bandlimited signal on the two factor graphs \cite{huang2018rating}. 
Based on such bandlimited model, \cite{PG,tensor} proposed an efficient \textit{structured} graph sampling strategy for MC, and then extended it to multidimensional tensor graph signals, whose effectiveness has been validated in recommendation system and point cloud sampling. 
However, structured sampling---selected samples must correspond to matrix entries that are intersections of chosen rows and columns of the matrix---is too restrictive to  achieve arbitrary sampling budgets. 
In this paper, we propose a fast \textit{unstructured} graph sampling strategy for MC, with comparable complexity to the structured counterpart \cite{PG,tensor}. 

\section{Problem Formulation}
\label{sec:formulate}
We derive an objective function for matrix sampling using a dual graph signal smoothness prior  \cite{EPFLgraph,monti2017geometric}. 
We first define the graph-based MC problem in Section \ref{subsec:matrixComplete}, and then formulate the graph spectral matrix sampling problem in Section \ref{subsec:formulateSampling}.

\subsection{Dual Graph Smoothness based Matrix Completion}
\label{subsec:matrixComplete}
Denote the original matrix signal and additive noise by $\X$ and $\N$ respectively, where $\X, \N \in \mathbb{R}^{m \times n}$.
Given a sampling set $\Omega=\{(i,j) \;|\; i \in\{1,\ldots,m\},\; j \in \{1,\ldots,n\}\}$, its corresponding sampling operator $\mathbf{A}_{\Omega} \in \{0, 1\}^{m \times n}$ can be defined as
\begin{equation}\label{sampling}
\begin{split}
{\mathbf{A}_{\Omega}(i,j)} = \left\{ \begin{array}{ll}
1,&\mbox{if}\; {(i,j) \in {\Omega}}; \\
0,&\mbox{otherwise}.
\end{array} \right.
\end{split}
\end{equation}

With the above notations, the sampled noise-corrupted observation is $\Y=\A_{\Omega}\circ(\X+\N) \in \mathbb{R}^{m\times n}$, where $\circ$ denotes the element-wise matrix multiplication operator.
We assume that elements in noise $\mathbf{N}$ are zero-mean, independent and identically distributed (i.i.d.) noise with the same variance. 

MC methods attempt to reconstruct the original matrix $\X$ from the partial noisy observations $\Y$, under an assumed prior for matrix $\X$, like low-rank \cite{candes2009exact}:
\begin{align}
&    \min_{\X}~~\textrm{rank}(\X)\\
&    \textrm{s.t.}~\|\A_{\Omega}\circ\X-\Y\|_F<\sigma\nonumber
\end{align}    
where $\sigma$ is set sufficiently small to enforce similar reconstruction of the observed samples $\Y$ in signal $\X$. 

Recently, \cite{EPFLgraph} introduced a \textit{dual graph smoothness prior} to promote low rank matrix reconstruction.
Specifically, columns of $\X$ are assumed to be smooth with respect to an undirected weighted  \textit{row graph} $\cG_r=\{\cV_r,\cE_r,\W_r\}$ with vertices $\cV_r=\{1,\dots,m\}$ and edges $\cE_r\subseteq \cV_r\times \cV_r$. 
Weight matrix $\W_r$ specifies pairwise similarities among vertices in $\cG_r$. 
The combinatorial graph Laplacian matrix of row graph $\cG_r$ is $\L_r=\D_r-\W_r$, where the degree matrix $\D_r$ is a diagonal matrix with entries $\D_r(i,i)=\sum_j\W_r(i,j)$. 
Taking the $j$-th column of $\X$, denoted by $\x_j$, as an example, the total graph variation of $\x_j$ on graph $\cG_r$ is defined as \cite{shuman13}: 
\begin{equation}\label{smoothness}
    \x^{\top}_j\L_r\x_j=\sum_{(k,l)\in\cE_r}\W_r(k,l)(\x_j(k)-\x_j(l))^2.
\end{equation}
Thus a smaller variation value would mean similar sample reconstructions between strongly connected nodes. 

Similarly, the rows of $\X$ are assumed smooth with respect to a \textit{column graph} $\cG_c=\{\cV_c,\cE_c,\W_c\}$ with vertices
$\cV_c=\{1,\dots,n\}$, edges $\cE_c \subseteq \cV_c \times \cV_c$ and weight matrix $\W_c$. 
Corresponding graph Laplacian matrix for the column graph $\cG_c$ is $\L_c=\D_c-\W_c$. 
Using the movie recommendation systems as an example, the row graph is a \emph{similarity graph} among movies, and the column graph is a \emph{social relationship graph} among viewers. 
Row and column graphs can be constructed from observed data using different methods \cite{EPFLgraph,huang2018matrix,hu2019feature}; we describe our adopted graph construction schemes in Section\;\ref{sec:graph-construction}.

We now formulate the MC problem with \textit{dual graph Laplacian regularization} (DGLR) \cite{jiahao2017graph} as follows:
\begin{align}\label{eq:obj0}
&\hspace{0cm}\min_{\mathbf{X}}~~f(\mathbf{X})=\frac{1}{2}\|\mathbf{A}_{\Omega}\circ(\mathbf{X}-\mathbf{Y})\|^{2}_{F}\\
&\hspace{1.8cm}+\frac{\alpha}{2}\text{Tr}\left(\mathbf{X}^{\top}\mathbf{L}_{r}\mathbf{X}\right)
+\frac{\beta}{2}\text{Tr}\left(\mathbf{X}\mathbf{L}_{c}\mathbf{X}^\top\right),\nonumber
\end{align}
where $\alpha$ and $\beta$ are parameters trading off the first fidelity term with the two signal smoothness priors. 


It has been shown through extensive experiments that the dual graph signal smoothness prior enables good MC performance \cite{EPFLgraph}. 
More generally, a graph smoothness prior is a generalization of the well-known \textit{Tikhonov regularizer}---popular regularization for ill-posed problems---to graph data kernels \cite{ortega18ieee}. 
In the fast growing field of GSP \cite{shuman13}, 
the graph smoothness prior has already been shown effective empirically for a wide range of inverse problems (\textit{e.g.}, image denoising / deblurring \cite{bai19tip,jiahao2017graph}, point cloud denoising \cite{zeng20183d,dinesh20183d}), and recently is successfully applied to MC also \cite{huang2018rating,EPFLgraph, monti2017geometric, tensor}. 
Next, we derive our sampling algorithm based on this well-accepted prior in the GSP community.

To solve the unconstrained QP problem \eqref{eq:obj0}, we take the derivative of $f(\X)$ with respect to $\X$, set it to $0$ and solve for $\X$, resulting in a system of linear equations for unknown $\text{vec}(\X^*)$: 
\begin{equation}\label{eq:linEqleft}
    \left(\tilde{\mathbf{A}}_{\Omega}+\alpha\mathbf{I}_n\otimes\mathbf{L}_{r}
+\beta\mathbf{L}_{c}\otimes\mathbf{I}_m\right)\text{vec}(\mathbf{X}^{*})=\text{vec}(\mathbf{Y})
\end{equation}
where $\tilde{\mathbf{A}}_{\Omega}=\text{diag}(\text{vec}({\mathbf{A}}_{\Omega}))$, $\textrm{vec}(\cdot)$ means a vector form of a matrix by stacking its columns, and $\textrm{diag}(\cdot)$ creates a diagonal matrix with input vector as its diagonal elements.  
See {Appendix \ref{solution}} for a detailed derivation. 

Since the coefficient matrix $\Q=\tilde{\mathbf{A}}_{\Omega}+\alpha\mathbf{I}_n\otimes\mathbf{L}_{r}
+\beta\mathbf{L}_{c}\otimes\mathbf{I}_m$ is in general symmetric, sparse and positive definite (PD)\footnote{See {Appendix \ref{Q}} for a detailed description of $\Q$.}, \eqref{eq:linEqleft} can be solved efficiently using a plethora of mature numerical linear algebra methods such as \textit{conjugate gradient} (CG)  \cite{conjugatedGradient}. 
This is one notable appeal of formulating the MC problem using the dual graph signal smoothness prior in \eqref{eq:obj0}, where computing its solution requires only solving a system of linear equations.


\subsection{Graph Sampling for Matrix Completion  based on DGLR Formulation}
\label{subsec:formulateSampling}


The stability of the linear system in \eqref{eq:linEqleft} is determined by the \textit{condition number} of coefficient matrix $\Q$, which is the ratio of the largest eigenvalue $\lambda_{\max}$ of $\Q$ to its smallest eigenvalue $\lambda_{\min}$.
Given that $\lambda_{\max}(\Q)$ is upper-bounded for a degree-constrained graph (see {Appendix \ref{upper-bound}} for a proof), to maximize stability, we seek to maximize $\lambda_{\min}(\Q)$ through sampling, \textit{i.e.},

\begin{equation}\label{eq:obj1}
\mathop{\text{max}}\limits_{\Omega}~~g(\Omega)=\lambda_{\text{min}}\left(\tilde{\mathbf{A}}_{\Omega}+\alpha\mathbf{I}_n\otimes\mathbf{L}_{r}
+\beta\mathbf{L}_{c}\otimes\mathbf{I}_m\right)
\end{equation}

Maximizing $\lambda_{\min}$ of a coefficient matrix is also known as the  \textit{E-optimality} criterion in optimal design \cite{chen15sampling,experiments,Boyd}, and is a common objective for many well-known linear system optimizations, \textit{e.g.}, {active learning \cite{yu2006active}, {sensor placement} \cite{MPME2016TSP} and polynomial regression \cite{pukelsheim1993optimal}.} 
In our sampling scenario, we show further that maximizing \eqref{eq:obj1} also means minimizing the MSE upper bound, as stated formally in the lemma below. 

\begin{lemma}
\label{lemma1} 

Given dual graph Laplacians $\L_r$ and $\L_c$, assuming ground truth signal $\X$ is corrupted by independent additive noise $\N$, MSE of the reconstructed signal $\mathbf{X}^{*}$ with respect to the original signal $\mathbf{X}$ is upper-bounded by 
\begin{equation}
\|\text{vec}(\mathbf{X}^{*})-\text{vec}(\mathbf{X})\|_2\leq\frac{\rho}{\lambda_{\min}(\mathbf{Q})}+
\|\text{vec}(\N)\|_2
\end{equation}
where $\rho=\|\left(\alpha\mathbf{I}_n\otimes\mathbf{L}_{r}+\beta\mathbf{L}_{c}\otimes\mathbf{I}_m\right)\left[\text{vec}(\mathbf{X+N})\right]\|_2$.
\end{lemma}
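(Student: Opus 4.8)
The plan is to work directly with the linear system \eqref{eq:linEqleft} and to exploit that the coefficient matrix $\Q$ is symmetric positive definite (as established in Appendix \ref{Q}), so that its spectral norm inverse satisfies $\|\Q^{-1}\|_2 = 1/\lambda_{\min}(\Q)$. First I would introduce the abbreviations $\x^{*}=\text{vec}(\X^{*})$, $\x=\text{vec}(\X)$, $\n=\text{vec}(\N)$, and $\B=\alpha\I_n\otimes\L_r+\beta\L_c\otimes\I_m$, so that $\Q=\tilde{\mathbf{A}}_{\Omega}+\B$. Using the identity $\text{vec}(\A_{\Omega}\circ\mathbf{M})=\tilde{\mathbf{A}}_{\Omega}\,\text{vec}(\mathbf{M})$ for any conformable $\mathbf{M}$, the observation model $\Y=\A_{\Omega}\circ(\X+\N)$ becomes $\text{vec}(\Y)=\tilde{\mathbf{A}}_{\Omega}(\x+\n)$, and hence \eqref{eq:linEqleft} reads $(\tilde{\mathbf{A}}_{\Omega}+\B)\x^{*}=\tilde{\mathbf{A}}_{\Omega}(\x+\n)$.

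The core estimate comes from subtracting a single well-chosen identity. Since $(\tilde{\mathbf{A}}_{\Omega}+\B)(\x+\n)=\tilde{\mathbf{A}}_{\Omega}(\x+\n)+\B(\x+\n)$, subtracting this from the rewritten system gives $\Q\big(\x^{*}-\x-\n\big)=-\B(\x+\n)$, so that $\x^{*}-\x-\n=-\Q^{-1}\B(\x+\n)$. Taking the Euclidean norm and using submultiplicativity, $\|\x^{*}-\x-\n\|_2\le\|\Q^{-1}\|_2\,\|\B(\x+\n)\|_2=\rho/\lambda_{\min}(\Q)$, where $\rho=\|(\alpha\I_n\otimes\L_r+\beta\L_c\otimes\I_m)\,\text{vec}(\X+\N)\|_2$ is exactly the quantity in the statement. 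The conclusion then follows from the triangle inequality $\|\x^{*}-\x\|_2\le\|\x^{*}-\x-\n\|_2+\|\n\|_2\le\rho/\lambda_{\min}(\Q)+\|\text{vec}(\N)\|_2$.

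There is no real obstacle here; the only two points deserving a line of justification are (i) the fact that $\|\Q^{-1}\|_2=1/\lambda_{\min}(\Q)$, which holds because $\Q$ is symmetric PD, so its singular values coincide with its (positive) eigenvalues and the largest singular value of $\Q^{-1}$ is the reciprocal of the smallest eigenvalue of $\Q$; and (ii) the bookkeeping identity $\text{vec}(\A_{\Omega}\circ\mathbf{M})=\tilde{\mathbf{A}}_{\Omega}\,\text{vec}(\mathbf{M})$, which is immediate since element-wise masking by $\A_{\Omega}$ is precisely left-multiplication of $\text{vec}(\mathbf{M})$ by the diagonal $0/1$ matrix $\tilde{\mathbf{A}}_{\Omega}=\text{diag}(\text{vec}(\A_{\Omega}))$. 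I would also remark that the bound uses neither zero-mean nor the i.i.d. assumption on $\N$ — it is a deterministic worst-case estimate — and that, since only $\lambda_{\min}(\Q)$ in the first term depends on the sample set $\Omega$, maximizing \eqref{eq:obj1} directly minimizes this upper bound, which is the point of the lemma.
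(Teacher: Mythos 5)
Your proof is correct and follows essentially the same route as the paper: both rewrite the normal equations as $\Q\,\text{vec}(\X^{*})=\tilde{\mathbf{A}}_{\Omega}\,\text{vec}(\X+\N)=(\Q-\L)\,\text{vec}(\X+\N)$ (your $\B$ is the paper's $\L$), isolate $\text{vec}(\X^{*})-\text{vec}(\X)-\text{vec}(\N)=-\Q^{-1}\L\,\text{vec}(\X+\N)$, and finish with the triangle inequality, submultiplicativity, and $\|\Q^{-1}\|_2=1/\lambda_{\min}(\Q)$ for symmetric PD $\Q$. Your closing remarks (the bound is deterministic and only the first term depends on $\Omega$) match the paper's own observations.
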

\begin{proof}
In vector form, 
$\text{vec}(\mathbf{Y})=\tilde{\mathbf{A}}_{\Omega}\left[\text{vec}(\mathbf{X+N})\right]$. 
Thus, the solution to the system of linear equations \eqref{eq:linEqleft} is 
\begin{align}\label{rec2}
& \text{vec}(\mathbf{X}^{*})=\mathbf{Q}^{-1}\textrm{vec}(\Y)=\mathbf{Q}^{-1}\tilde{\mathbf{A}}_{\Omega}\left[\text{vec}(\mathbf{X+N})\right]\\\nonumber
&=\mathbf{Q}^{-1}\left(\mathbf{Q}-\alpha\mathbf{I}_n\otimes\mathbf{L}_{r}
-\beta\mathbf{L}_{c}\otimes\mathbf{I}_m\right)\left[\text{vec}(\mathbf{X+N})\right] \\
& =\text{vec}(\mathbf{X})+\textrm{vec}(\N)\nonumber-
\mathbf{Q}^{-1}\L\left[\text{vec}(\mathbf{X+N})\right],\nonumber
\end{align}
where $\L=\alpha\I_n\otimes\L_r+\beta\L_c\otimes\I_m$.

Thus the squared error of estimator $\text{vec}(\X^{*})$ with respect to $\text{vec}(\X)$ is
\begin{align}\label{MSE}
& \|\text{vec}(\mathbf{X}^{*})-\text{vec}(\mathbf{X})\|_2\\\nonumber
&=\|\textrm{vec}(\N)-\mathbf{Q}^{-1}\L\text{vec}(\mathbf{X}+\N)\|_2\\\nonumber
&\leq\|\mathbf{Q}^{-1}\L\text{vec}(\mathbf{X}+\N)\|_2
+\|\textrm{vec}(\N)\|_2\\\nonumber
&\leq\|\mathbf{Q}^{-1}\|_2\|\L\text{vec}(\mathbf{X}+\N)\|_2
+\|\textrm{vec}(\N)\|_2\nonumber\\
&=\rho\|\mathbf{Q}^{-1}\|_2+\|\textrm{vec}(\N)\|_2\nonumber
\end{align}
where $\rho=\|\left(\alpha\mathbf{I}_n\otimes\mathbf{L}_{r}+\beta\mathbf{L}_{c}\otimes\mathbf{I}_m\right)\left[\text{vec}(\mathbf{X+N})\right]\|_2$. 

From inequality \eqref{MSE}, we see that sampling set $\Omega$ only influences the MSE upper bound by manipulating $\|\Q^{-1}\|_2$. 
Moreover, for symmetric and  positive definite matrix $\mathbf{Q}$, we know 
\begin{equation}\label{eq:norm}
\|\mathbf{Q}^{-1}\|_2=\lambda_{\text{max}}(\mathbf{Q}^{-1})
=\frac{1}{\lambda_{\text{min}}(\mathbf{Q})}.
\end{equation}
We complete this proof by substituting \eqref{eq:norm} into equation \eqref{MSE}.
\end{proof}
In the next section, we will present a fast graph sampling strategy to solve optimization problem \eqref{eq:obj1}. 

\section{Fast Sampling on Product Graph via Gershgorin Circle Theorem}
\label{sec:sampling1}
For brevity, we interpret  $\L=\alpha\mathbf{I}_n\otimes\mathbf{L}_{r}+\beta\mathbf{L}_{c}\otimes\mathbf{I}_m$ as the Laplacian of a \emph{scaled product graph}\footnote{Cartesian product between two matrices $\L_r$ and $\L_c$ is defined by:
$\L_r \odot \L_c=\mathbf{I}_n\otimes\mathbf{L}_{r}+\mathbf{L}_{c}\otimes\mathbf{I}_m$.}.
Thus optimization \eqref{eq:obj1} becomes the maximization of $\lambda_{\min}$ for matrix $\tilde{\A}_{\Omega}+\L$.
By definition, $\tilde{\A}_{\Omega}$ is a diagonal matrix:
\begin{equation}\label{samplingProduct}
\begin{split}
\tilde{\A}_{\Omega}(l,l) = \left\{ \begin{array}{ll}
1,&\mbox{if}\; l\in\cS; \\
0,&\mbox{otherwise}.
\end{array} \right.
\end{split}
\end{equation}
where $    \cS=\left\{l|l=i+m\times(j-1),\forall(i,j)\in \Omega\right\}$.

Thus, coefficient matrix $\Q$ can be rewritten as:
\begin{equation}\label{eq:sumobj}
    \Q=\L+\tilde{\A}_{\Omega}
    =\L+\sum^{K}_{t=1}\e_{k_t}\e^{\top}_{k_t},
\end{equation}
where $K=|\cS|$, $k_t=\cS(t)$ and $\e_{k_t}$ is an indicator vector with $\e_{k_t}({k_t})=1$ and $\e_{k_t}(q)=0$ for $q\neq k_t$.

{Finding an optimal $\Omega$ (or $\cS$) to maximize $\lambda_{\min}(\Q)$ is combinatorial in nature.}
Towards a low-complexity sampling strategy, we take a greedy approach, where we iteratively add a locally optimal sample to a selected sample set until the sample budget is exhausted.
Hence, assuming we have collected $t-1$ samples in $\cS_{t-1}$, at the $t$-th iteration, we solve the following local optimization problem:
\begin{equation}\label{eq:greedyShfit}
   k^*_{t}= \argmax_{k_t\in\cS_{t-1}^c}~~\lambda_{\min}(\L_{t-1}+\e_{k_t}\e^{\top}_{k_t}),
\end{equation}
where $t\in \{1,\dots,K$\}, $\cS_{t}=\cS_{t-1}\cup k^*_t$ with  $\cS_0=\emptyset$, and
$\L_t=\L_{t-1}+\e_{k^*_t}\e^{\top}_{k^*_t}$ with $\L_0=\L$.

To find an optimal solution $k^*_t$ in \eqref{eq:greedyShfit} for each new sample, one can compute $\lambda_{\min}$ of the \emph{incremented Laplacian}\footnote{{We use ``increment'' here to mean increasing one diagonal element of a matrix by 1 (equivalently shifting the center of one Gershogrin disc right by 1), while other matrix entries remain unchanged.}} $\L_{t-1}+\e_{k_t}\e^{\top}_{k_t}$ corresponding to all candidate nodes $k_t \in \cS^c_{t-1}$ and identify the largest one, which is computation-intensive.
Instead, we circumvent multiple computations of the smallest eigenvalue for candidates using a strategy based on the \emph{Gershgorin circle theorem} (GCT).

\subsection{Gershgorin Disc Shift based Graph Sampling}
\label{gershgorin-sec}

We first review GCT and its corollary \cite{yuji}, which will lead to a lightweight sampling method later.

\begin{theorem}
\label{gershgorin}

Given an $n \times n$ matrix $\A$  with entries $a_{ij}$, define the $i$-th Gershgorin disc $D(a_{ii},R_i)$, corresponding to the $i$-th row of $\A$, with center $a_{ii}$ and radius
$R_i=\sum_{j\neq i}|a_{ij}|$.
Each eigenvalue $\lambda$ of $\A$ lies within at least one Gershgorin disc, \textit{i.e.},
\begin{equation}
\exists~ i ~~ | ~~
a_{ii}-R_i \leq \lambda \leq a_{ii}+R_i.
\end{equation}
\end{theorem}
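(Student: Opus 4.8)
The plan is to use the classical eigenvector-coordinate argument: pick an eigenvalue $\lambda$, take an associated eigenvector, and localize $\lambda$ by looking at the row of $\A$ indexed by the largest-magnitude coordinate of that eigenvector. Concretely, let $\x\neq\0$ satisfy $\A\x=\lambda\x$, and choose $i$ so that $|x_i|=\max_{j}|x_j|$; since $\x\neq\0$ we have $|x_i|>0$.

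Next I would write out the $i$-th scalar equation of $\A\x=\lambda\x$, namely $\sum_j a_{ij}x_j=\lambda x_i$, and isolate the diagonal term to get $(\lambda-a_{ii})x_i=\sum_{j\neq i}a_{ij}x_j$. Taking absolute values and applying the triangle inequality followed by the bound $|x_j|\le|x_i|$ for every $j$ yields
\begin{equation}
|\lambda-a_{ii}|\,|x_i|\le\sum_{j\neq i}|a_{ij}|\,|x_j|\le\Big(\sum_{j\neq i}|a_{ij}|\Big)|x_i|=R_i\,|x_i|.
\end{equation}
Dividing through by $|x_i|>0$ gives $|\lambda-a_{ii}|\le R_i$, i.e.\ $\lambda$ lies in the disc $D(a_{ii},R_i)$; specializing to the real case (or reading the inequality along the real axis) recovers the stated two-sided bound $a_{ii}-R_i\le\lambda\le a_{ii}+R_i$. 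Since $i$ was produced from $\lambda$'s eigenvector, this establishes existence of at least one such index, which is exactly the claim.

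There is essentially no hard part here — the only subtlety is the choice of $i$ as the index of a maximal-magnitude coordinate, which is what makes the ratios $|x_j|/|x_i|$ bounded by $1$ and hence lets the off-diagonal sum be controlled by $R_i$; without that choice the argument fails. Everything else is the triangle inequality and a division by a nonzero scalar, so I would present it compactly in a few lines rather than belaboring it.
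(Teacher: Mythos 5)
Your proof is correct and follows essentially the same route as the paper's own (the classical argument: take an eigenpair, select the index $i$ of the maximal-magnitude coordinate, write the $i$-th row of $\A\x=\lambda\x$, and apply the triangle inequality to get $|\lambda-a_{ii}|\le R_i$). Your additional remark about reading the modulus bound as a two-sided real inequality is a reasonable clarification, since the paper states the disc containment in that form for its symmetric real matrices.
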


\begin{corollary}\label{coro1}
If the largest magnitude component of an eigenvector $\x$ is at index $i$, then its corresponding eigenvalue $\lambda$ must be within the $i$-th Gershgorin disc $D(a_{ii},R_i)$.
\end{corollary}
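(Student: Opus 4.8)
The plan is to prove Corollary \ref{coro1} as an immediate consequence of the argument inside the (commented-out) proof of Theorem \ref{gershgorin}. First I would fix an eigenpair $\{\lambda, \x\}$ of $\A$ with $\A\x = \lambda\x$, and let $i$ be an index achieving the largest magnitude component, i.e. $|\x(i)| = \max_j |\x(j)|$. Since $\x \neq \0$, we have $|\x(i)| > 0$, so dividing is legitimate. Reading off the $i$-th scalar equation of $\A\x = \lambda\x$ gives $\sum_{j \neq i} a_{ij}\x(j) + a_{ii}\x(i) = \lambda\x(i)$.

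Next I would rearrange this to $(\lambda - a_{ii})\x(i) = \sum_{j\neq i} a_{ij}\x(j)$, take absolute values, divide through by $|\x(i)|$, and apply the triangle inequality together with the bound $|\x(j)/\x(i)| \le 1$ for all $j$, to obtain
\begin{equation}
|\lambda - a_{ii}| = \left| \sum_{j\neq i} a_{ij}\frac{\x(j)}{\x(i)} \right| \leq \sum_{j\neq i} |a_{ij}| = R_i.
\end{equation}
This says precisely that $\lambda \in D(a_{ii}, R_i)$, which is the claim of the corollary. The point distinguishing the corollary from the theorem is merely the observation that the disc index produced by the argument is exactly the location $i$ of the largest-magnitude entry of the eigenvector, rather than ``some'' disc; so I would phrase the proof to emphasize that the index $i$ is the one we chose.

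I do not anticipate a genuine obstacle here, since the corollary is a direct re-reading of the standard GCT proof. The only mild subtlety worth a sentence is the case where the maximizing index $i$ is not unique: the argument works verbatim for any choice of $i$ attaining the maximum, so the statement should be read as ``there exists such an $i$'' or, if one prefers, it holds for every maximizer. I would keep the proof to three or four lines and simply reference that the bulk of it coincides with the proof of Theorem \ref{gershgorin}.
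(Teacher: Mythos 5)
Your proof is correct and matches the paper's intended argument exactly: the corollary is read off from the standard Gershgorin proof (which the paper sketches in a commented-out block), by noting that the disc index produced there is precisely the location of the largest-magnitude eigenvector entry. Your handling of $|\x(i)|>0$ and of non-unique maximizers is fine and adds nothing beyond what the paper assumes.
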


This corollary implies that $\lambda_{\min}$ of matrix $\L_{t-1}$ must reside in the $j^*$-th Gershgorin disc, where $j^*=\argmax_j~|\bphi(j)|$ and $\bphi$ is the first eigenvector of $\L_{t-1}$ corresponding to $\lambda_{\min}$ \footnote{{In this paper, eigenvectors are all normalized, \textit{i.e.}, $\|\bphi\|_2=1$}.}.
By \eqref{eq:greedyShfit}, $\e_{k_t}\e^{\top}_{k_t}$
shifts the center of the $k_t$-th Gershgorin disc of $\L_{t-1}$ to the right by 1.
\textit{
Our strategy
is then to right-shift the Gershgorin disc corresponding to the largest magnitude entry $k_t^* \in \mathcal{S}_{t-1}^c$ in $\bphi$ which contains $\lambda_{\min}$, thus promoting a larger $\lambda_{\min}$ in $\L_t$; \textit{i.e.}, select sample $k^*_t$ where
}
\begin{align}\label{eq:greedySelect}
&k^*_t=\mathop{\argmax}\limits_{k_t\in \cS^c_{t-1}}~~|\bphi(k_t)|, \\
&\textrm{s.t.} ~~\L_{t-1} \bphi = \lambda_{\min}(\L_{t-1})~ \bphi
\nonumber
\end{align}

\noindent
\textbf{Remark:}
To choose one sample, our strategy requires computation of only the first eigenvector of a sparse matrix $\L_{t-1}$ once, without multiple evaluations for all candidates.


Note that in \eqref{eq:greedySelect} we select the index $k_t^*$ with the largest magnitude $|\bphi(k_t^*)|$ \textit{only} among entries in the unsampled set $\mathcal{S}_{t-1}^c$ instead of the entire vector, as specified in Corollary 1.
However, one can guarantee that the largest magnitude index $k_t^*$ in $\bphi$, in fact, only resides in $\mathcal{S}_{t-1}^c$, and thus \eqref{eq:greedySelect} is consistent with Corollary 1.
We state this formally in the following Proposition:


\begin{proposition}\label{uniquesampling}
$k^*_t$ computed from \eqref{eq:greedySelect} is also the index with the largest magnitude in $\bphi$, \textit{i.e.}, $k_t^*={\argmax}_{j\in \cV}~~|\bphi(j)|$.
\end{proposition}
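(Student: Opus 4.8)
The plan is to show that the largest-magnitude entry of the first eigenvector $\bphi$ of $\L_{t-1}$ never falls on an already-sampled index, so that restricting the $\argmax$ in \eqref{eq:greedySelect} to $\cS_{t-1}^c$ loses nothing. The key structural observation is that $\L_{t-1} = \L + \sum_{s<t}\e_{k_s^*}\e_{k_s^*}^\top$ differs from the scaled product-graph Laplacian $\L$ only in that the diagonal entries indexed by the sampled set $\cS_{t-1}$ have each been incremented by $1$. In other words, every sampled node has a strictly larger Gershgorin disc center $a_{ll} = \L(l,l) + 1$, while unsampled nodes keep center $a_{ll} = \L(l,l)$; crucially, all row radii $R_l$ are unchanged because the added rank-one terms are diagonal.

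First I would set up a proof by contradiction: suppose the largest-magnitude index $j^* = \argmax_{j\in\cV}|\bphi(j)|$ lies in the sampled set $\cS_{t-1}$. By Corollary~\ref{coro1}, the eigenvalue $\lambda_{\min}(\L_{t-1})$ then lies in the Gershgorin disc $D(a_{j^*j^*}, R_{j^*})$ centered at $a_{j^*j^*} = \L(j^*,j^*)+1$, so in particular $\lambda_{\min}(\L_{t-1}) \geq a_{j^*j^*} - R_{j^*} = \L(j^*,j^*) + 1 - R_{j^*}$. The second ingredient is an upper bound on $\lambda_{\min}(\L_{t-1})$ coming from the unsampled nodes: since $\L$ is a (scaled product) graph Laplacian, each of its rows satisfies $\L(l,l) = R_l$ — the diagonal equals the sum of off-diagonal magnitudes, as the Cartesian product of Laplacians is again a Laplacian (degrees balance the negated edge weights). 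Hence for any unsampled index $l\in\cS_{t-1}^c$ (and there is at least one, since $t-1 < K \le |\cV|$ and we may also invoke that $\L$ itself has $\lambda_{\min}=0$ with the constant eigenvector, keeping the relevant spectrum bounded), the disc $D(\L(l,l), R_l) = D(R_l, R_l)$ has left endpoint $0$. More to the point, I would argue $\lambda_{\min}(\L_{t-1}) < 1$: this follows because $\L_{t-1} = \L + \tilde\A_{\cS_{t-1}}$ with $\L \succeq 0$ and $\tilde\A_{\cS_{t-1}}$ a $0/1$ diagonal matrix that is not the identity (as $t-1 < |\cV|$), so testing $\L_{t-1}$ against any unit vector $\e_l$ with $l\in\cS_{t-1}^c$ gives $\e_l^\top \L_{t-1}\e_l = \L(l,l) + 0 = \L(l,l)$, and combined with $\L$ being singular (eigenvalue $0$) one gets $\lambda_{\min}(\L_{t-1}) < 1$ — I would sharpen whichever of these bounds is cleanest, likely using that $\L_{t-1}$ restricted appropriately still has a direction of value $<1$.

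Combining the two bounds yields $1 \le \L(j^*,j^*) + 1 - R_{j^*} \le \lambda_{\min}(\L_{t-1}) < 1$, provided I can show $\L(j^*,j^*) - R_{j^*} \ge 0$, which is exactly the Laplacian diagonal-dominance identity $\L(l,l) = R_l$ holding with equality for the product graph (so the quantity is $0$, giving $1 \le \lambda_{\min}(\L_{t-1}) < 1$, a contradiction). Therefore no sampled index can be the global $\argmax$, so $j^*\in\cS_{t-1}^c$ and \eqref{eq:greedySelect} indeed returns the true global maximizer, matching Corollary~\ref{coro1}.

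The main obstacle I anticipate is pinning down the strict inequality $\lambda_{\min}(\L_{t-1}) < 1$ cleanly for every iteration $t \le K$: one must ensure there is always at least one unsampled node (true since $K \le mn = |\cV|$, and for the last sample one should check the edge case $K = |\cV|$ separately — but if all nodes are eventually sampled the restriction is vacuous anyway), and one must combine the singularity of $\L$ with the structure of $\tilde\A_{\cS_{t-1}}$ rather than just non-negativity. A convenient route is the eigenvalue interlacing / rank-one update perspective: adding $\e_{k}\e_k^\top$ raises each eigenvalue by at most $1$ (Weyl), and since $\lambda_{\min}(\L)=0$, after $t-1 < |\cV|$ such increments the smallest eigenvalue is still $< 1$ as long as not every coordinate direction has been "used up," which one can make precise by noting $\tilde\A_{\cS_{t-1}} \preceq \I$ with strict inequality on the unsampled block. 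Everything else — the disc-center bookkeeping, the radius invariance under diagonal increments, and the Laplacian identity $\L(l,l)=R_l$ — is routine and follows directly from the definitions already given in the excerpt.
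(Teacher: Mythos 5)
Your overall strategy matches the paper's exactly: assume for contradiction that the global maximizer $j^*$ of $|\bphi|$ lies in $\cS_{t-1}$; note that the sampled discs of $\L_{t-1}$ have left-end $1$ (because the product-graph Laplacian satisfies $\L(l,l)=R_l$ and the diagonal increments leave the radii unchanged), so Corollary~\ref{coro1} forces $\lambda_{\min}(\L_{t-1})\ge 1$; then exhibit an upper bound $\lambda_{\min}(\L_{t-1})<1$ to reach a contradiction.

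The one step that would fail as you lead with it is the strict upper bound. Testing against $\e_l$ for an unsampled $l$ gives $\e_l^\top\L_{t-1}\e_l=\L(l,l)=\alpha\D_r(i,i)+\beta\D_c(j,j)$, which has no reason to be below $1$ for a general weighted graph; and the Weyl / rank-one perturbation route only yields $\lambda_{\min}(\L_{t-1})\le\lambda_{\min}(\L)+\lambda_{\max}(\tilde{\A}_{\Omega})=0+1=1$, which is not strict. The ingredient you mention only parenthetically --- the constant null eigenvector of $\L$ --- is the one that actually closes the argument, and it is precisely what the paper uses: taking $\c=\frac{1}{\sqrt{mn}}\1$ as the Rayleigh test vector,
\[
\lambda_{\min}(\L_{t-1})\;\le\;\c^\top\L\c+\sum_{i\in\cS_{t-1}}\c(i)^2\;=\;0+\frac{|\cS_{t-1}|}{mn}\;<\;1,
\]
where strictness holds because $\cS_{t-1}$ is a proper subset of $\cV$ (guaranteed for every greedy step since $t-1<K\le mn$; in the degenerate case where all nodes end up sampled, the restriction in \eqref{eq:greedySelect} is vacuous anyway, as you note). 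With that single substitution your proof is complete and coincides with the paper's; the remaining bookkeeping (radius invariance, $\L(l,l)=R_l$, and the contradiction $1\le\lambda_{\min}(\L_{t-1})<1$) is correct as you wrote it.
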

\begin{proof}
Based on the definition below equation \eqref{eq:greedyShfit}, we can deduce that
\begin{equation}
\L_{t-1}=\L+\sum_{i\in\cS_{t-1}}\e_{i}\e^{\top}_{i}.
\end{equation}

Hence, in matrix $\L_{t-1}$, left-ends of Gershgorin discs corresponding to indices $i\in \cS_{t-1}$ are at 1, and the other discs' left-ends are at 0.
Suppose now that $j^*={\argmax}_{j\in \cV}~~|\bphi(j)|$ and  $j^*\in \cS_{t-1}$.
According to Corollary 1, the smallest eigenvalue $\lambda_{\min}(\L_{t-1})$ must be within $j^*$-th Gershgorin disc,  \textit{i.e.}, $\lambda_{\min}(\L_{t-1})\ge 1$.


{We also know the first eigenvector of matrix $\L$ is a constant vector $\c=\frac{1}{\sqrt{mn}}[1,\dots,1]$ with eigenvaue 0. This yields:
\begin{align}
  & \lambda_{\min}(\L_{t-1})=\min_{\|\x\|_2=1}~\x^\top\left(\L+\sum_{i\in\cS_{t-1}}\e_{i}\e^{\top}_{i}\right)\x\nonumber\\
   &\le \c^\top\L\c+\c^\top\left(\sum_{i\in\cS_{t-1}}\e_{i}\e^{\top}_{i}\right)\c\\
   &=\|\c(\cS_{t-1})\|^2_2<1\nonumber
\end{align}
where the last inequality holds since $\cS_{t-1}\subset \cV$}.

This is contradictory to previous result that $\lambda_{\min}(\L_{t-1})\ge 1$.
Thus $j^*\in\cS^c_{t-1}$ and $j^*=k^*_t$.
\end{proof}

Thus, we conclude that entries within set $\cS_{t-1}$ cannot have the largest energy in first eigenvector $\bphi$ of $\L_{t-1}$.

We can alternatively justify our strategy by showing that the index chosen by our strategy optimizes a related objective to \eqref{eq:greedyShfit}.
We state this formally in the following lemma.

\begin{lemma}
\label{paper:lemma2}

An optimal solution to the problem
\begin{equation}\label{eq:proxy}
    \argmax_{k_t\in\cS_{t-1}^c}~~\lim_{\delta\rightarrow 0} ~ \lambda_{\min}(\L_{t-1}+\delta\e_{k_t}\e^{\top}_{k_t})
\end{equation}
is $k^*_t=\mathop{\argmax}\limits_{k_t\in \cS^c_{t-1}}~~|\bphi(k_t)|$, where $\bphi$ is the first eigenvector of $\L_{t-1}$ corresponding to smallest eigenvalue $\lambda_{\min}(\L_{t-1})$.
\end{lemma}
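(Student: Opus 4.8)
The plan is to use first-order eigenvalue perturbation theory. Writing $\L_{t-1}(\delta) = \L_{t-1} + \delta \e_{k_t}\e^{\top}_{k_t}$, I would treat $\lambda_{\min}$ as a function of $\delta$ near $\delta=0$ and compute its derivative. The standard Hellmann--Feynman / Rayleigh--Schr\"odinger argument gives, for a simple eigenvalue with normalized eigenvector $\bphi$,
\begin{equation}\label{eq:perturb}
\left.\frac{d}{d\delta}\lambda_{\min}(\L_{t-1}(\delta))\right|_{\delta=0} = \bphi^{\top}\,\e_{k_t}\e^{\top}_{k_t}\,\bphi = |\bphi(k_t)|^2 .
\end{equation}
Consequently $\lambda_{\min}(\L_{t-1}+\delta\e_{k_t}\e^{\top}_{k_t}) = \lambda_{\min}(\L_{t-1}) + \delta\,|\bphi(k_t)|^2 + o(\delta)$, so maximizing the perturbed smallest eigenvalue in the limit $\delta \to 0^+$ is equivalent (to first order, after subtracting the constant $\lambda_{\min}(\L_{t-1})$ and dividing by $\delta$) to maximizing $|\bphi(k_t)|^2$, hence to maximizing $|\bphi(k_t)|$, over $k_t \in \cS^c_{t-1}$. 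That yields exactly $k^*_t = \argmax_{k_t\in\cS^c_{t-1}} |\bphi(k_t)|$.

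The key steps in order: (i) state that $\lambda_{\min}(\L_{t-1})$ is a simple eigenvalue so that the perturbation expansion is valid and $\bphi$ is well-defined up to sign (I would invoke that $\L_{t-1}$ is PD and connected-graph-like, or simply assume simplicity as is standard in this literature; since eigenvectors are normalized per the paper's footnote, $\|\bphi\|_2=1$); (ii) derive \eqref{eq:perturb} either by differentiating $\L_{t-1}(\delta)\bphi(\delta) = \lambda_{\min}(\delta)\bphi(\delta)$, left-multiplying by $\bphi^{\top}$, and using $\bphi^{\top}\bphi=1$ together with $\bphi^{\top}\L_{t-1} = \lambda_{\min}\bphi^{\top}$ to kill the $\bphi^{\top}\L_{t-1}\bphi'(\delta)$ term; (iii) conclude $\lambda_{\min}(\L_{t-1}+\delta\e_{k_t}\e^{\top}_{k_t}) = \lambda_{\min}(\L_{t-1}) + \delta|\bphi(k_t)|^2 + o(\delta)$; (iv) observe the objective in \eqref{eq:proxy}, after the harmless monotone rescaling $\lambda \mapsto (\lambda - \lambda_{\min}(\L_{t-1}))/\delta$, is maximized precisely when $|\bphi(k_t)|$ is largest; then match this to the claimed $k^*_t$.

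The main obstacle is making the $\delta\to 0$ statement rigorous rather than purely formal: the literal quantity $\lim_{\delta\to 0}\lambda_{\min}(\L_{t-1}+\delta\e_{k_t}\e^{\top}_{k_t})$ is just $\lambda_{\min}(\L_{t-1})$ for every $k_t$, so the argmax is not literally well-defined without interpreting the limit as ``the leading-order behavior for small $\delta$'' (i.e., comparing the first-order slopes). I would address this by noting that for sufficiently small $\delta>0$ the ordering of $\{\lambda_{\min}(\L_{t-1}+\delta\e_{k_t}\e^{\top}_{k_t})\}_{k_t}$ stabilizes and is governed by the slopes $|\bphi(k_t)|^2$, so the argmax is the same for all small enough $\delta$; a minor sub-obstacle is the tie-breaking case where two entries of $\bphi$ share the maximum magnitude, which only matters at second order and does not affect the stated conclusion. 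I would also briefly note that $\e_{k_t}\e^{\top}_{k_t}$ is positive semidefinite so $\lambda_{\min}$ is non-decreasing in $\delta$, which is consistent with the non-negative slope in \eqref{eq:perturb} and reinforces the Gershgorin-disc interpretation given before the lemma.
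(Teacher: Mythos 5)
Your proposal is correct and follows essentially the same route as the paper: both arguments reduce to the first-order expansion $\lambda_{\min}(\L_{t-1}+\delta\e_{k_t}\e^{\top}_{k_t}) = \lambda_{\min}(\L_{t-1}) + \delta\,\bphi(k_t)^2 + o(\delta)$, the paper obtaining it via the Rayleigh quotient (arguing the minimizer tends to $\bphi$ as $\delta\to 0$) and you via the equivalent Hellmann--Feynman differentiation of the eigenvalue equation. Your explicit observation that the literal limit in \eqref{eq:proxy} equals $\lambda_{\min}(\L_{t-1})$ for every $k_t$, so the objective must be read as a comparison of first-order slopes, is a point the paper glosses over and is worth keeping.
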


\vspace{0.1in}
\begin{proof}
Since matrix $\L_{t-1}$ is augmented by a small matrix $\delta\e_{k_t}\e^{\top}_{k_t}$, the resulting matrix is $
    {\tilde{\L}_t}=\L_{t-1}+\delta\e_{k_t}\e^{\top}_{k_t}
$,
where $k_t\in\cS_{t-1}^c$.
Using the Rayleigh quotient theorem \cite{neumannseries}, we can write $\lambda_{\min}$ of matrix $\tilde{\L}_t$ as
\begin{align}
 &   \lambda_{\min}(\tilde{\L}_t)=\min_{\x}~\frac{\x^{\top}\L_{t-1}\x+\delta\x^{\top}\e_{k_t}\e^{\top}_{k_t}\x}{\x^{\top}\x}\\
 &=\min_{\x}~\frac{\x^{\top}\L_{t-1}\x+\delta\x(k_t)^2}{\x^{\top}\x},\nonumber
\end{align}
where the {minimizer} $\x^*$ is the first eigenvector of $\L_{t-1}$ when $\delta\rightarrow 0$, \textit{i.e.}, $\lim_{\delta\rightarrow 0}\x^*=\bphi$.
Therefore,
\begin{align}
 &   \lim_{\delta \rightarrow 0}~\lambda_{\min}(\tilde{\L}_t)=\lambda_{\min}(\L_{t-1})+\delta{\bphi(k_t)^2},
\end{align}
where $\bphi^\top\bphi$ is omitted since $\|\bphi\|_2=1$.

{Given collected $\cS_{t-1}$, $\lambda_{\min}({\L_{t-1}})$ does not depend on $k_t$}.
Hence,
\begin{align}
\label{eq:limit}
 &\argmax_{k_t\in\cS_{t-1}^c}~\lim_{\delta \rightarrow 0}~\lambda_{\min}(\tilde{\L}_t)=\argmax_{k_t\in\cS_{t-1}^c}~\delta{\bphi(k_t)^2}=k^*_t
\end{align}
\end{proof}
Thus, by computing $k^*_t$ using  \eqref{eq:greedySelect}, we are \textit{optimally} solving problem \eqref{eq:proxy}, which is a proxy approximating original \eqref{eq:greedyShfit}.

\subsection{Fast Repeated Eigenvector Computation with Warm Start}
{Our proposed formulation \eqref{eq:greedySelect}  requires computing the first eigenvector of $\L_{t-1}$ in each greedy step.
In this paper, we will adopt the state-of-the-art LOBPCG method \cite{lobpcg} to compute the first eigenvector, which has been proved very efficient for large sparse matrices  \cite{anis16}. }
{With an initial input $\x_0$, in each iteration, LOBPCG works as follows:
\begin{enumerate}
    \item Multiply $\mathbf{L}_{t-1}$ with $\x_i\in \mathbb{R}^{mn}$  (guess of the first eigenvector in $i$-th iteration) with complexity $\cO(E_{t-1})$, where $E_{t-1}$ is the number of non-zero entries in  $\L_{t-1}$;
    \item Perform a Rayleigh-Ritz step \cite{lobpcg} to compute the combination coefficients, solving an eigenvalue problem with complexity $\cO(r^3)$. $r$ is the number of computed eigenvectors and in our problem, $r=1$;
    \item Update $\x_i$ based on  Rayleigh-Ritz coefficients, go to step (1) until convergence.
\end{enumerate}}

Our proposed algorithm can benefit computationally from warm start when deploying LOBPCG:
we use the estimated first eigenvector $\bphi$ of $\L_{t-1}$ in the last iteration as the initial guess $\x_0$ for $\L_t$.
The small change between $\L_{t}$ and $\L_{t-1}$ (the Forbenius norm difference is only 1) ensures a good initial guess, reducing the number of iterations for LOBPCG to converge.
{Simulation results in Section \ref{sec:results} show that warm start does reduce sampling time noticeably.}
\begin{algorithm}[tp]
\caption{Proposed GCS Sampling Algorithm}
\label{GCS}
\textbf{Input}: Sample budget $K$; $\L=\alpha\mathbf{I}_n\otimes\mathbf{L}_{r}+\beta\mathbf{L}_{c}\otimes\mathbf{I}_m$; random vector $\mathbf{v}$\\
\textbf{Initialization}: $\cS=\emptyset$
\begin{algorithmic}[1]
\State \textbf{While} $|\cS|<K$
\State compute the first eigenvector $\pmb{\phi}$ of $\L$ with initial guess $\mathbf{v}$
\State $i^* \gets \max_{i\in\cS^c}|\bphi(i)|$
\State $\cS \gets \cS\cup\{i^*\}$
\State Update $\L=\L+\e_{i^*}\e^\top_{i^*}$ and $\mathbf{v}=\bphi$
\State \textbf{end While}
\State return $\cS$
\end{algorithmic}
\end{algorithm}
We write the pseudo code of our sampling strategy in Algorithm \ref{GCS}, called \emph{Gershgorin circle shift} (GCS)-based sampling.

\subsection{Complexity Analysis}
The complexity of our proposed GCS method is dominated by three components:
i) $K$ times greedy search, ii) first eigenvector computation, and iii) finding the largest element's location in each greedy search.

Identifying the largest energy index in a vector with length $mn$ has complexity $\cO(mn)$.
The complexity of using LOBPCG to compute the first eigenvector of $\L_t$ is $\cO(E_tF_t)$, where $F_t$ is the number of iterations till convergence in LOBPCG.
Note that $E_0=E_1=\dots=E_{K-1}$ 
since $\L_t=\L_{t-1}+\e_{k^*_t}\e^{T}_{k^*_t}$, and the diagonal terms of $\L$ are all non-zero for a connected graph.
{Because $\L_0=\L=\alpha\I_n\otimes\L_r+\beta\L_c\otimes\I_m$, \textit{i.e.}, matrix $\L$ is consisted of $n$ matrix $\L_r$ and $m$ matrix $\L_c$, the number nonzero entries in $\L$ is at most  $E_0=\cO(n|\cE_r|+m|\cE_c|)$, where $|\cE_r|$ and $|\cE_c|$ are the numbers of edges in graph $\L_r$ and $\L_c$ respectively. }

{Therefore,
denoting by $F=\max\{F_0,\dots,F_{K-1}\}$, in each greedy step, the complexity of LOBPCG is $\cO((n|\cE_r|+m|\cE_c|)F)$;} combined with $K$ times greedy search and signal sorting, our GCS method has the complexity  $\cO(K(n|\cE_r|+m|\cE_c|)F+Kmn)$.
If the row graph and column graph are both sparse such that $|\cE_r|=\cO(m)$ and $|\cE_c|=\cO(n)$, then the complexity of GCS can be abbreviated as $\cO(KFmn)$.
Though the spectral proxy based sampling method in \cite{anis16} also computes the first eigenvector of a submatrix of $\L^p$ via LOBPCG, it does not benefit from warm start, and the complexity of computing $\L^p\x$  will be higher than $\L\x$ by at least by a factor $p$.

\subsection{{Explanation from Graph Spectral Energy Perspective}}
We now interpret the GCS sampling from an energy spreading perspective.
First, we define the absolute value of the first eigenvector of incremented Laplacian as \textit{graph spectral energy}.
Our proposed GCS method is to select node with the largest energy at each greedy step.
Once node $i$ is sampled, intuitively, the energy of this node and nodes near $i$ should decrease such that in the next step, the proposed strategy will not sample those nodes.
This is formally stated in the next lemma:
\begin{lemma}\label{lemma:bound}
Denote by $\lambda_0=\lambda_{\min}(\L_t)$, $\beta_0=\lambda_{\min}(\L_t+\e_i\e^\top_i)$
and
$\L_t\bphi=\lambda_0\bphi$ , $(\L_t+\e_i\e^\top_i)\bpsi=\beta_0\bpsi$. Then
\begin{equation}
    \lambda_0+\bpsi(i)^2\le\beta_0\le\lambda_0+\bphi(i)^2
\end{equation}
In our problem, $\L_t \in \{\L_0,\L_1,\dots,\L_{K-1}\}$.
\end{lemma}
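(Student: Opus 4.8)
The plan is to prove the two-sided bound by exploiting the variational (Rayleigh quotient) characterization of $\lambda_{\min}$ twice: once testing the Rayleigh quotient of the incremented matrix $\L_t+\e_i\e_i^\top$ against the eigenvector $\bphi$ of $\L_t$, and once testing the Rayleigh quotient of $\L_t$ against the eigenvector $\bpsi$ of $\L_t+\e_i\e_i^\top$. Both $\bphi$ and $\bpsi$ are unit-norm by the paper's convention, so the denominators in the Rayleigh quotients are $1$ and can be dropped throughout.

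For the upper bound, I would write
\begin{align}
\beta_0 &= \min_{\|\x\|_2=1} \x^\top(\L_t+\e_i\e_i^\top)\x
\le \bphi^\top(\L_t+\e_i\e_i^\top)\bphi \nonumber \\
&= \bphi^\top\L_t\bphi + (\e_i^\top\bphi)^2
= \lambda_0 + \bphi(i)^2, \nonumber
\end{align}
where the inequality is just the definition of the minimum, and the last equality uses $\bphi^\top\L_t\bphi=\lambda_0$ (since $\bphi$ is the unit eigenvector for $\lambda_0$) together with $\e_i^\top\bphi=\bphi(i)$.

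For the lower bound, I would argue symmetrically but in the other direction:
\begin{align}
\beta_0 &= \bpsi^\top(\L_t+\e_i\e_i^\top)\bpsi
= \bpsi^\top\L_t\bpsi + \bpsi(i)^2 \nonumber \\
&\ge \lambda_{\min}(\L_t) + \bpsi(i)^2 = \lambda_0 + \bpsi(i)^2, \nonumber
\end{align}
where the first equality uses that $\bpsi$ is the unit eigenvector of $\L_t+\e_i\e_i^\top$ achieving $\beta_0$, and the inequality is the Rayleigh quotient bound $\bpsi^\top\L_t\bpsi\ge\min_{\|\x\|_2=1}\x^\top\L_t\x=\lambda_0$ applied to the particular unit vector $\bpsi$. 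Combining the two displays gives the claimed chain $\lambda_0+\bpsi(i)^2\le\beta_0\le\lambda_0+\bphi(i)^2$.

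Honestly there is no serious obstacle here — the statement is essentially a one-line consequence of min-max, and the only things to be careful about are the normalization $\|\bphi\|_2=\|\bpsi\|_2=1$ (so the Rayleigh denominators vanish) and the identity $\e_i^\top\v=\v(i)$ for the rank-one term. The final sentence ``In our problem, $\L_t\in\{\L_0,\dots,\L_{K-1}\}$'' just records that the lemma is applied at each greedy step, and needs no proof. I would state the result for a general symmetric PD matrix $\L_t$ and a general index $i$, since the argument never uses anything more.
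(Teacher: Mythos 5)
Your proof is correct and follows essentially the same route as the paper's: both bounds come from the Rayleigh-quotient characterization of $\lambda_{\min}$, using $\bphi$ as a test vector for $\L_t+\e_i\e_i^\top$ to get the upper bound and the inequality $\bpsi^\top\L_t\bpsi\ge\lambda_0$ to get the lower bound. No gaps.
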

\begin{proof}
From the Rayleigh quotient theorem \cite{neumannseries},
\begin{equation}\label{raylei1}
    \lambda_0=\min_{\|\x\|_2=1}~\x^\top\L_{t}\x=\bphi^\top\L_{t}\bphi.
\end{equation}
and,
\begin{align}\label{raylei2}
 &   \beta_0=\min_{\|\y\|_{2}=1}~\y^\top(\L_t+\e_i\e^\top_i)\y\\
 &=\bpsi^\top(\L_t+\e_i\e^\top_i)\bpsi=\bpsi^\top\L_{t}\bpsi+\bpsi(i)^2.\nonumber
\end{align}
From equation \eqref{raylei1}, we know  $\bpsi^\top\L_{t}\bpsi\ge\bphi^\top\L_{t}\bphi=\lambda_0$, which implies $\beta_0\ge\lambda_0+\bpsi(i)^2$.
From equation \eqref{raylei2}, we can derive that
$\beta_0\le\bphi^\top\L_{s}\bphi=\bphi^\top\L_{t}\bphi+\bphi(i)^2=\lambda_0+\bphi(i)^2$, which is exactly the right part of the lemma.
\end{proof}
\begin{figure}
  \centering
  \includegraphics[width=100pt,height=80pt]{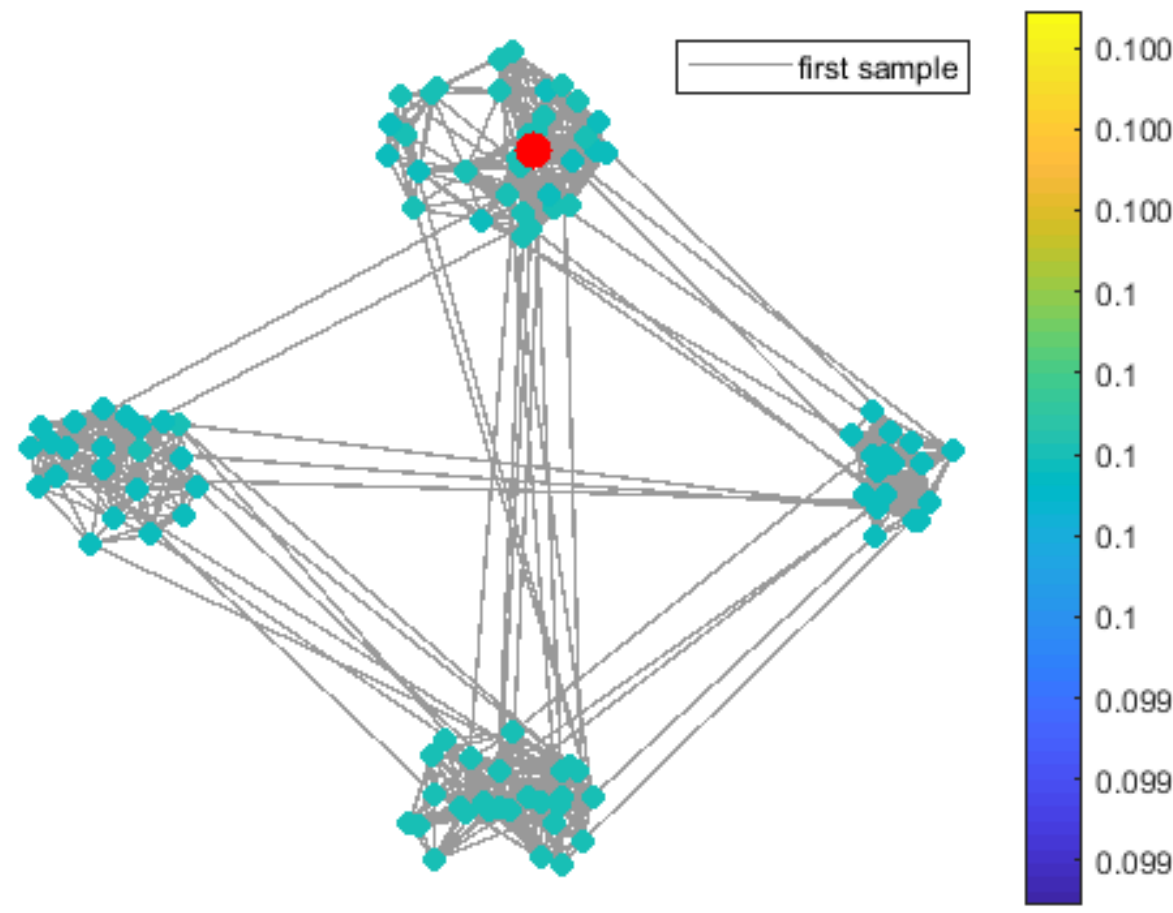}
  \includegraphics[width=100pt,height=80pt]{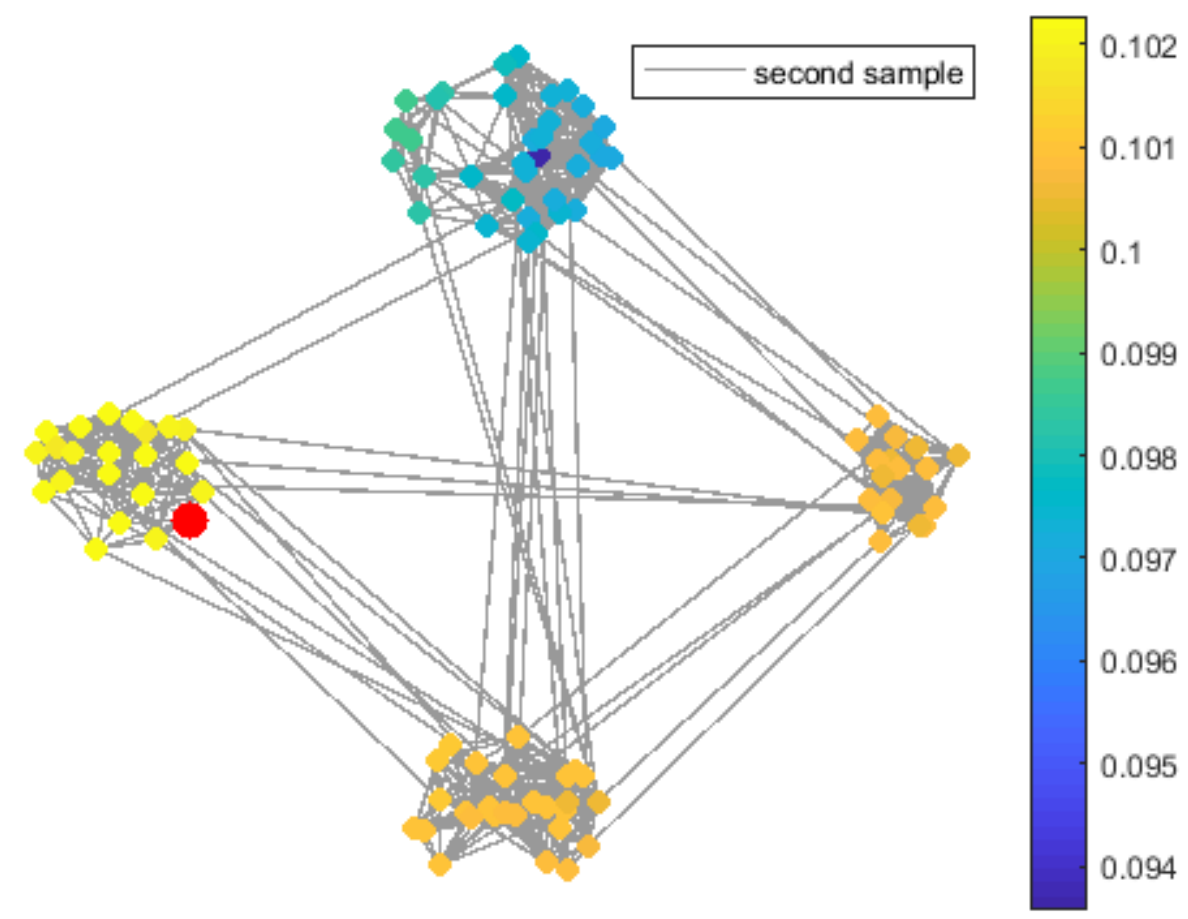}
  \includegraphics[width=100pt,height=80pt]{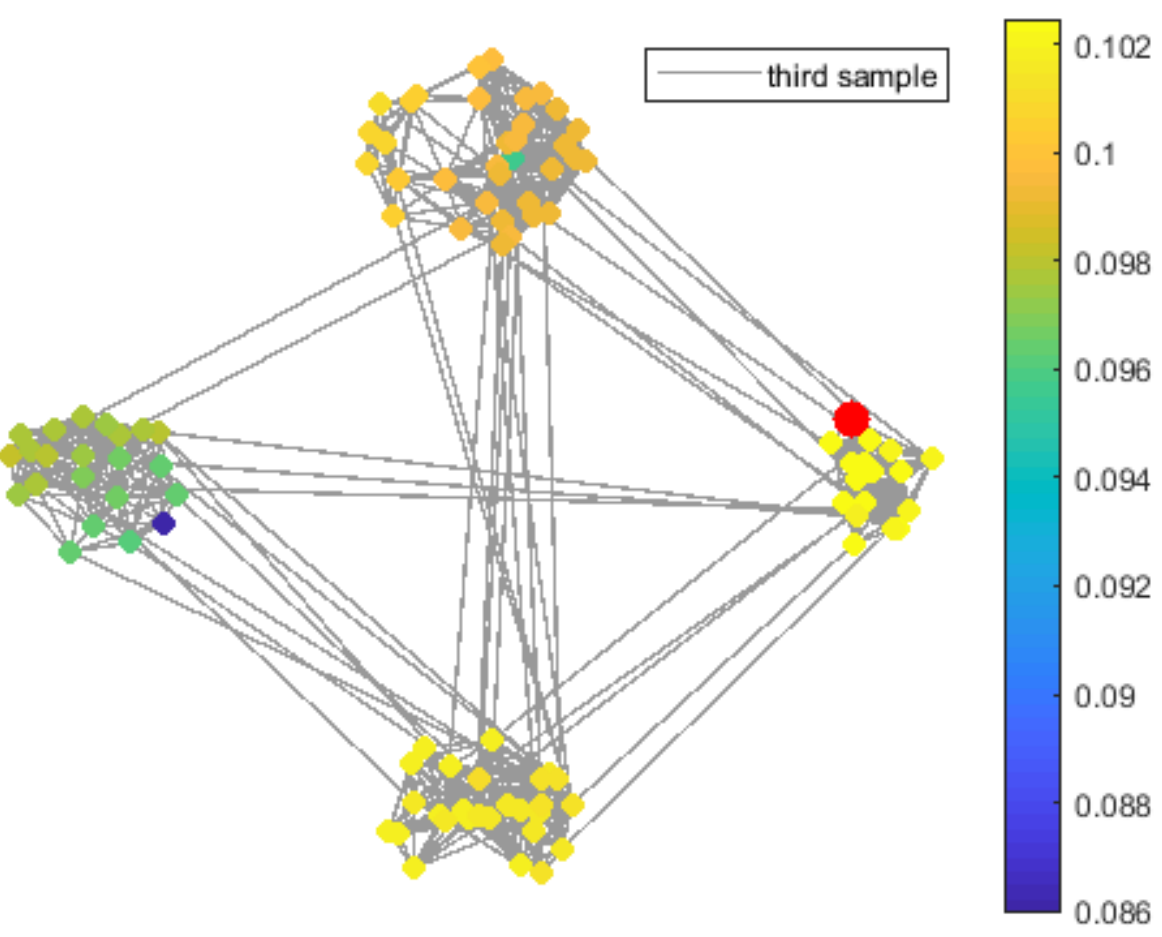}
  \includegraphics[width=100pt,height=80pt]{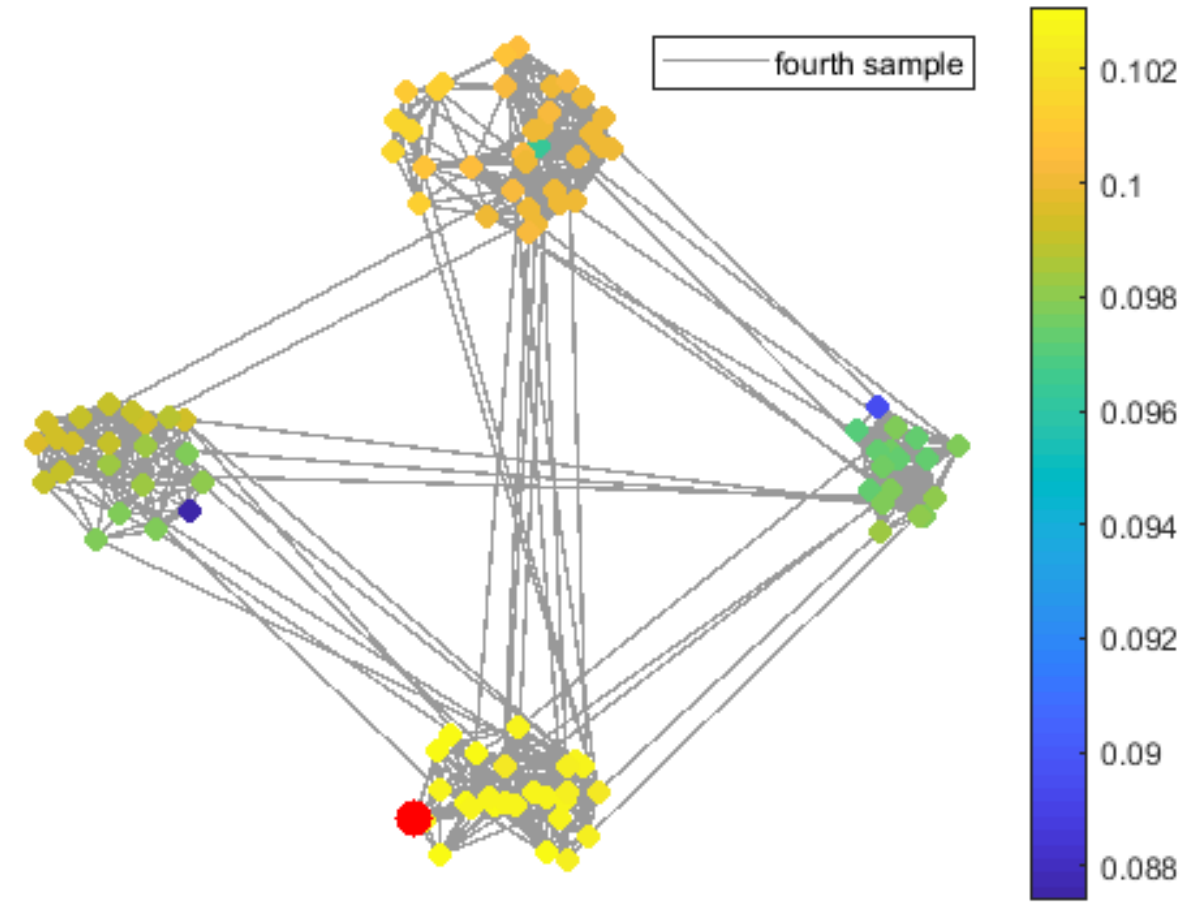}
  \caption{Sampling procedure of the proposed GCS method on a community graph with 100 nodes and 4 communities. The red circle is the sampled node in each step, whose signal energy is the largest one. 
  }\label{fig:toy_GCS}
\end{figure}
This Lemma states that sampling node $i$ will reduce the spectral energy at node $i$.
Moreover, sampling node $i$ with the largest $|\bphi(i)|$ actually maximizes the upper-bound of $\lambda_{\min}(\L_t+\e_i\e^\top_i)$.
We know that the value of $|\bpsi(i)|$ is penalized from $|\bphi(i)|$, so selecting the node with largest $|\bphi(i)|$ will also promote a reasonably large $|\bpsi(i)|$, thus provide a large lower-bound of $\lambda_{\min}(\L_t+\e_i\e^\top_i)$.

Equation \eqref{smoothness} tells us that strongly connected nodes would have similar signal, thus the energy of nodes near $i$ is also decreased when node $i$ is sampled.
Therefore, nodes close to $i$ will not be sampled by the proposed GCS method with highly probability.
This agrees with our intuition: \textit{node $i$ carries information of its local neighborhood; after sampling it, there is no need to sample connected nodes in its neighborhood.}

We conduct toy experiments on a community graph with 100 nodes using the proposed GCS sampling, whose results are shown in  Fig.\ref{fig:toy_GCS}. As depicted in this experiment, the first four samples lie in four different communities.
From the graph energy perspective, sampling one node in one community will decrease the energy of nodes within this community, {thus leading to sampling the next node  from other communities.}


\section{Iterative Graph Spectral Sampling for Matrix Completion}
\label{sec:sampling2}
Though we identify samples using LOBPCG to compute first eigenvectors repeatedly with warm start, sampling on a very large product graph (matrix $\L$) with $mn$ nodes 
is still expensive for large real-world MC datasets.
We thus propose an efficient block-wise sampling method for MC  problem operating on two corresponding row and column graphs, while retaining the same sampling idea in GCS. 
Towards a simpler presentation, we omit $\Omega$ in $\tilde{\A}_{\Omega}$ in the sequel. 

Since coefficient matrix $\Q$ is a combination of row and column from $\L_r$ and $\L_c$, it does not exhibit any structure that one can exploit for optimization. 
We thus split $\Q$ into two separate matrices $\Q_1$ and $\Q_2$ as follows:
\begin{align}
&\mathbf{Q}=\left(q\tilde{\mathbf{A}}+\alpha\mathbf{I}_n\otimes\mathbf{L}_{r}\right)
+\left((1-q)\tilde{\mathbf{A}}
+\beta\mathbf{L}_{c}\otimes\mathbf{I}_m\right)\nonumber\\
&
\triangleq \Q_1+\Q_2,
\end{align}
where $0<q<1$ is a split parameter. 

Since $\Q_1$ and $\Q_2$ are both Hermitian, by Weyl's inequality \cite{neumannseries}
\begin{equation}\label{Weyl}
\lambda_{\min}(\Q) \geq \lambda_{\min}(\Q_1)+\lambda_{\min}(\Q_2),
\end{equation}
which indicates that each selected sample affects respective $\lambda_{\min}$'s of $\Q_1$ and $\Q_2$ and the lower bound of $\lambda_{\min}(\Q)$. 

From a Gershgorin circle perspective, each selected sample shifts one disc in $\alpha\I_n\otimes \L_r$ and $\beta\L_c\otimes\I_m$ by $q$ and $1-q$, respectively. 
Next, we will exploit the \textit{block diagonal} property of matrices $\Q_1$ and $\Q_2$ to develop an efficient sampling framework.  

\subsection{Block Diagonal Structure and Inner Connections}

$\Q_1\in \mathbb{R}^{mn\times mn}$ has block-diagonal structure, \textit{i.e.},
\begin{align}\label{Q1}
&\Q_1 = q\tilde{\mathbf{A}}+\alpha\mathbf{I}_n\otimes\mathbf{L}_{r}\\
&=q\left[ \begin{array}{cccc}
 \tilde{\mathbf{A}}_1& 0 & \cdots & 0 \\
0&\tilde{\mathbf{A}}_2&  & \vdots\\
\vdots & & \ddots &0\\
0 & \cdots & \cdots&\tilde{\mathbf{A}}_n
 \end{array} \right]+\alpha
\left[ \begin{array}{cccc}
\L_r & 0 & \cdots & 0 \\
0 & \L_r &  & \vdots \\
\vdots & & \ddots &0\\
0 & \cdots & \cdots&\L_r
\end{array} \right]
\nonumber
\end{align}
where $\tilde{\A}_j\in\mathbb{R}^{m\times m}$ is the $j$-th diagonal block of $\tilde{\A}$.

When entry $(i,j)$ of the target matrix $\X$ is sampled, $\tilde{\A}_j(i,i)=1$ since $\mathbf{A}(i,j)=1$ and  $\tilde{\mathbf{A}}=\text{diag}(\text{vec}({\mathbf{A}}))$. 
Equivalently, if $\tilde{\A}_j(i,i)=1$, we know that the information from $i$-th row (movie) and $j$-th column (customer) is collected. 

In contrast, matrix  $\Q_2 \in \mathbb{R}^{mn \times mn}$ is:
\begin{align}\label{Q2}
&
\Q_2=(1-q)\tilde{\mathbf{A}}+\beta\mathbf{L}_{c}\otimes\mathbf{I}_m=(1-q)\tilde{\mathbf{A}}\\
&+\beta
\left[ \begin{array}{cccc}
 \mathbf{L}_c(1,1)\mathbf{I}_m & \mathbf{L}_c(1,2)\mathbf{I}_m & \cdots & \mathbf{L}_c(1,n) \mathbf{I}_m \\
 \mathbf{L}_c(2,1)\mathbf{I}_m & \mathbf{L}_c(2,2)\mathbf{I}_m & \cdots & \mathbf{L}_c(2,n) \mathbf{I}_m \\
 \vdots & \vdots & \ddots & \vdots \\
\mathbf{L}_c(n,1)\mathbf{I}_m & \mathbf{L}_c(n,2)\mathbf{I}_m & \cdots & \mathbf{L}_c(n,n)\mathbf{I}_m
 \end{array} \right]\nonumber
\end{align}

It is known that matrix  $\L_{c}\otimes\I_m$ and $\I_m\otimes\L_{c}$ are permutation similar, \textit{i.e.}, there exists a  permutation matrix $\P$ such that \cite{Q2perm}: 
\begin{equation}
    \P(\L_{c}\otimes\I_m)\P^\top=\I_m\otimes\L_{c}
\end{equation}

\begin{figure}
  \centering
  \includegraphics[width=190pt,height=120pt]{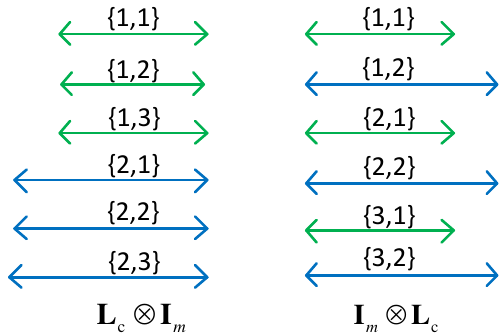}
  \caption{Gershgorin discs of matrix $\L_{c}\otimes\I_m$ (left one) and matrix $\I_m\otimes\L_{c}$ (right one), where $\L_c\in \mathbb{R}^{2\times 2}$ and $m=3$.}\label{fig:Q2perm}
\end{figure}

Thus the permuted sampling matrix $\hat{\A}$ for $\I_m\otimes\L_{c}$ is also a block-diagonal matrix. 
 \begin{equation}
   \hat{\A}=\P\tilde{\A}\P^\top
   =\left[ \begin{array}{l}
 \hat{\mathbf{A}}_1 \\
 ~~~~~\hat{\mathbf{A}}_2\\
 ~~~~~~~~~~~~\ddots\\
 ~~~~~~~~~~~~~~~~~~\hat{\mathbf{A}}_m
 \end{array} \right]
 \end{equation}
where $\hat{\A}_i\in \mathbb{R}^{n\times n}$ is the $i$-th diagonal block of $\hat{\A}$.

Combined with the block diagonal property of $\I_m\otimes\L_{c}$, we will write the permuted form of $\Q_2$ as follows: 
 \begin{align}\label{Q2}
&\hat{\Q}_2=\P\Q_2\P^\top= (1-q)\hat{\mathbf{A}}+\beta\mathbf{I}_m\otimes\mathbf{L}_{c}\\
&=\hat{q}\left[ \begin{array}{cccc}
 \hat{\mathbf{A}}_1& 0 & \cdots & 0 \\
0&\hat{\mathbf{A}}_2&  & \vdots\\
\vdots & & \ddots &0\\
0 & \cdots & \cdots&\hat{\mathbf{A}}_m
 \end{array} \right]+\beta
\left[ \begin{array}{cccc}
\L_c & 0 & \cdots & 0 \\
0 & \L_c &  & \vdots \\
\vdots & & \ddots &0\\
0 & \cdots & \cdots&\L_c
\end{array} \right]
\nonumber
\end{align}
where $\hat{q}=1-q$ for brevity. 

From the property of similarity transform, we know that $\lambda(\Q_2)=\lambda(\hat{\Q}_2)$ since $\P^{-1}=\P^\top$ for any permutation matrices. 
From \eqref{Q2}, we see that when $\hat{\A}_i(j,j)=1$, the $j$-th disc in $i$-th block in matrix $\I_m\otimes\L_{c}$ is shifted. 

{We illustrate the relationship between matrix $\hat{\A}$ and $\tilde{\A}$ via a simple example.  }
Assuming $m=3,\L_c\in\mathbb{R}^{2\times2}$, we have the following two matrices: 
\begin{align}
   \L_{c}\otimes\I_3
   =\left[ \begin{array}{cccccc}
 l_{11}&~&~&l_{12}&~&~\\
 ~&l_{11}&~&~&l_{12}&~\\
 ~&~&l_{11}&~&~&l_{12}\\
  l_{21}&~&~&l_{22}&~&~\\
 ~&l_{21}&~&~&l_{22}&~\\
 ~&~&l_{21}&~&~&l_{22}
 \end{array} \right]
\end{align}
\begin{align}
   \I_3\otimes\L_{c}
   =\left[ \begin{array}{cccccc}
 l_{11}&l_{12}&~&~&~&~ \\
 l_{21}&l_{22}&~&~&~&~\\
~&~&  l_{11}&l_{12}&~&~ \\
~&~&  l_{21}&l_{22}&~&~\\
~&~&~&~&  l_{11}&l_{12}\\
~&~&~&~& l_{21}&l_{22}
 \end{array} \right]
\end{align}
where $l_{ij}$ are the elements in matrix $\L_c$. 

 We first assign the Gershgorin discs in matrices $\L_{c}\otimes\I_3$ and  $\I_3\otimes\L_{c}$ with indices, 
as illustrated in Fig.\;\ref{fig:Q2perm}. Assuming $\tilde{\A}_1(3,3)=1$, the disc $\{1,3\}$ in matrix $\L_{c}\otimes\I_3$ is shifted by $\hat{q}$. After permutation, this means that the disc $\{3,1\}$ in matrix $\I_3\otimes\L_{c}$ is shifted, \textit{i.e.}, $\hat{\A}_3(1,1)=1$.  
Thus, $\tilde{\A}_j(i,i)=1$ is equivalent to  $\hat{\A}_i(j,j)=1$. 

Therefore, {sampling data $\X$ at $(i,j)$ will promote the smallest eigenvalue of $j$-th ($i$-th) diagonal block of $\Q_1$ {($\hat{\Q}_2$)} by making $\tilde{\A}_j(i,i)=1$ {($\hat{\A}_i(j,j)=1$)}.}
Given this connection, we next propose an iterative sampling strategy. 
Block matrices in $\Q_1$ and $\hat{\Q}_2$ will be called \emph{`clusters'} and   \emph{`groups'} respectively. 

\subsection{Iterative Sampling between Clusters and Groups}
\label{subsec:sampling2}
We here propose to alternately  collect  samples based on one cluster in $\Q_1$ only or one group in $\hat{\Q}_2$ only. 
Specifically, we start sampling the matrix signal from the first column ($j=1$) based on the Laplacian matrix of the first cluster in $\Q_1$. 
If its  first eigenvetor has the largest energy at the $i$-th index, we will sample data at $(i,1)$ and then proceed sampling based on corresponding incremented Laplacian matrix of the $i$-th group in $\hat{\Q}_2$. 
We continue to choose samples alternating between clusters and groups until the sampling budget is exhausted. 

Since our GCS sampling strategy using LOBPCG benefits from warm start, the computation complexity of this iterative scheme can be further reduced if we choose more than one sample from the same cluster (or group). 
We thus introduce a warm start parameter $\zeta$ to trade off sampling performance and computation complexity; its sensitivity will be examined in Section \ref{sec:results}.
Detailed iterative sampling pseudo-code is shown in Algorithm \ref{IGCS}, called \emph{iterative Gershgorin circle shift} (IGCS)-based sampling. 
As we analyzed in Section\;\ref{sec:sampling1}, our proposed GCS has complexity $\cO(KFmn)$, while the complexity of IGCS is just $\cO(K\hat{F}c)$.  $\hat{F}$ is the convergence iteration number of LOBPCG in IGCS, and $c=\max\{m,n\}$. 
{Therefore, the complexity is reduced by at least a factor $\min\{m,n\}$ using this iterative sampling framework, \textit{i.e.}, the complexity is roughly linear to the size of factor graph.}

\begin{algorithm}[tp]
\caption{Proposed IGCS Sampling Algorithm} 
\label{IGCS}
\textbf{Input}:$K$, $\L_r$, $\L_c$, $q$, $\alpha$, $\beta$ and warm start parameter $\zeta$\\
\textbf{Initialization}: $\Omega=\emptyset$,  
$\tilde{\A}_j=\mathbf{0}$,  $\hat{\A}_i=\mathbf{0}$, $j=1$, $s=1$ and $w=0$,  $\hat{q}=1-q$
\begin{algorithmic}[1]
\State \textbf{while} $|\Omega|<K$
\If {$s=1$} \Comment{sample from $j$-th cluster}
\State $\tilde{\L}=q\tilde{\A}_j+\alpha\L_r$;
$\cS=\{t|\tilde{\A}_j(t,t)=1\}$; $w=w+1$
\State If $w=1$, random $\v\in \mathbb{R}^m$; else, $\v=\bphi$
\State compute the first eigenvector $\bphi$ of $\tilde{\L}$ with input $\v$ 
\State $k^* \gets \max_{k\in\cV_r\setminus\cS}|\bphi(k)|$
\State $\Omega\gets \Omega\cup\{(k^*,j)\}$
\State $\tilde{\A}_j(k^*,k^*)=1$; $\hat{\A}_{k^*}(j,j)=1$
\State If $w\ge \zeta$, then $i=k^*$, $s=2$ and $w=0$
\Else  \Comment{sample from $i$-th group}
\State $\hat{\L}=\hat{q}\hat{\A}_i+\beta\L_c$;  $\cS=\{t|\hat{\A}_i(t,t)=1\}$; $w=w+1$
\State If $w=1$, random $\v\in \mathbb{R}^n$; else, $\v=\bphi$
\State compute the first eigenvector $\pmb{\phi}$ of $\hat{\L}$ with input $\v$
\State $k^* \gets \max_{k\in\cV_c\setminus\cS}|\bphi(k)|$
\State $\Omega\gets \Omega\cup\{(i,k^*)\}$
\State $\tilde{\A}_{k^*}(i,i)=1$; $\hat{\A}_i(k^*,k^*)=1$
\State If $w\ge \zeta$, then $j=k^*$, $s=1$ and $w=0$
\EndIf
\State \textbf{end while}
\State return $\Omega$
\end{algorithmic}
\end{algorithm}



\section{Graph Construction}
\label{sec:graph-construction}
Before using graph sampling for MC via the proposed IGCS, one has to first acquire the row graph $\L_r$ and the column graph $\L_c$. 
There exist many methods to construct finite row / column graphs from data, so that the observed signal(s) are smooth (low-pass) with respect to the constructed graphs \cite{EPFLgraph,huang2018rating}. 
For completeness, we overview methods we chose to construct row and column graphs using which we select samples. 
We stress that our work focuses on sampling; the discussion here merely demonstrates that our graph sampling schemes can be practically realized in combination with existing graph learning methods. 

$\bullet$ G1: \textbf{Feature-based graph}

As done in graph-based MC methods \cite{monti2017geometric,berg2017graph}, when the user / item profiles (\textit{e.g.}, age, gender and occupation of users and genre of the items) are available, we construct a weighted 10-nearest neighbor graph using GSPBox \cite{gspbox} based on feature vectors of each node. 

$\bullet$ G2: \textbf{Content-based graph from observed information}

When features of data points are not available, we construct row and column graphs only from partial matrix entries, extending method used in \cite{EPFLgraph}. 
Specifically, the observed matrix is $\Z=\A_{\Gamma}\circ\X$ for a given random initial set $\Gamma$. Then, for  each pair of users $\{i,j\}$, their partial ratings are in the $i$- and $j$-th rows of matrix $\Z$, denoted by $\z_i$ and $\z_j$. 
We then compute the inter-node distance as  \begin{align}
    d_{ij}=\frac{||\mathbf{z}_i(\mathcal{R}_{ij})-\mathbf{z}_j(\mathcal{R}_{ij})||_2}{\sqrt{|\cR_{ij}|}}
\end{align} 
where $\mathcal{R}_{ij}=\mathcal{R}_i\cap\mathcal{R}_j$, and $\cR_i$ is the set of items rated by user $i$. 

If $|\cR_{ij}|=0$, we set $d_{ij}=\infty$.
We then compute the edge weight between users $i$ and $j$ as 
\begin{equation}\label{sampling}
\begin{split}
w_{ij} = \left\{ \begin{array}{ll}
\exp\{-(d_{ij}-d_{\min})^2/\gamma\};&\mbox{if}\; d_{ij}\le d_s \\
0,&\mbox{otherwise}
\end{array} \right.
\end{split}
\end{equation}
where $d_{\min}=\min_{\{i,j\}}d_{ij}$ and $d_s$ is the threshold of user $i$ for sparsifying $\W_r$;  $\gamma$ is a factor to control function shape for weight computation. 

Likewise, the item graph is constructed similarly using column ratings in observed matrix $\Z$.
In our experiments, for real-world datasets, we first assume partial ground-truth data is known, and then use them to construct the factor graphs based on the above method. 
With the constructed graphs, we proceed the following sampling based on different schemes and then compute the completion error on the unobserved entries. 



\section{Experimentation}
\label{sec:results}
\begin{table*}
\centering
\caption{Profiles of experimented datasets.}
\label{tab:dataset}
\begin{tabular}{lcccccccc}
\toprule
dataset  &Exp.      & users & items & features & entries & density  & entry levels \\\midrule
\texttt{Synthetic Netflix} \cite{EPFLgraph}&Fig.\;\ref{fig:small_data}  & 200     & 100      &    -      &  20,000       &     100\%    &   1,2,...,5      \\ \midrule
\multirow{2}{*}[-0.5ex]{\texttt{ML100K} \cite{movielens}}&Fig.\;\ref{fig:small_data} &   100 &200       &    -      &  12,566      &   62.83\%      &   1,2,...,5           \\
 &Tab.\;\ref{tab:movielens_100k1}/\ref{tab:movielens_100k2}& 943   &    1682   &    \checkmark      &  100,000       & 6.3\%        &   1,2,...,5           \\ \midrule
\multirow{2}{*}[-0.5ex]{\texttt{ML10M} \cite{movielens}} &
Fig.\;\ref{fig:small_data}&100   &200       &       -   &  18,119       &   90.6\%      &   0.5,1,...,5           \\
 & Tab.\;\ref{tab:datasets}& 1000  & 500      &       -   &   345,904      &     69.18\%    &   0.5,1,...,5           \\ \midrule
\texttt{Douban} \cite{berg2017graph}&Tab.\;\ref{tab:datasets}   & 3000   &    3000   &   - & 136,891       & 1.52\%   &   1,2,...,5           \\
\texttt{Flixster} \cite{berg2017graph}&Tab.\;\ref{tab:datasets}        &  3000  & 3000      &     - & 26,173        &     0.29\%    &  0.5,1,...,5            \\
\texttt{YahooMusic} \cite{berg2017graph} &Tab.\;\ref{tab:datasets}     & 3000  & 3000      &  -  & 5,335   &  0.06\%       &   1,2,...,100           \\
\texttt{Book-Crossing} \cite{book_crossing}&Tab.\;\ref{tab:datasets}  &1000   & 1000      &    -      &  3,166       &     0.32\%    &   1,2,...,10          \\
\texttt{Jester} \cite{Jester}&Tab.\;\ref{tab:datasets} &1000 &100
&- &73,320 & 73.32\%& (0,1)
\\
\texttt{ML1M} \cite{movielens}   & Tab.\;\ref{tab:datasets}& 6040  & 3706      &  \checkmark         &  1,000,209       &   4.47\%      &  1,2,...,5            \\
\texttt{FilmTrust} \cite{Filmtrust} & Tab.\;\ref{tab:datasets} & 1000&1000&-&31,880&3.19\%& 0.5,1,...,4\\
\bottomrule
\end{tabular}
\end{table*}

\begin{figure*}
    \centering
            \subfigure[Nosiy synthetic Netflix signal]{
    \includegraphics[width=150pt,height=150pt]{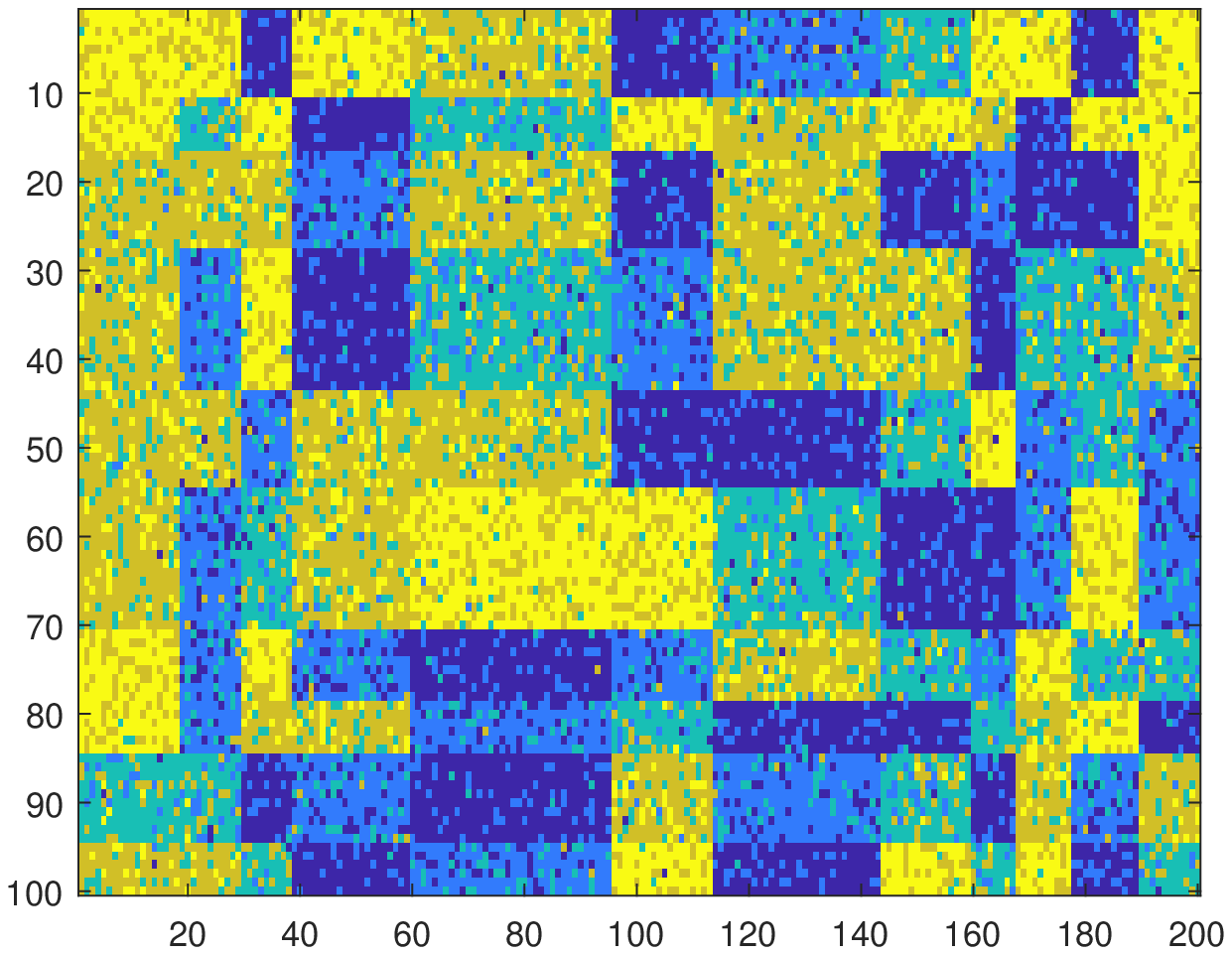}}
        \subfigure[RMSE on noiseless signal]{
    \includegraphics[width=150pt,height=150pt]{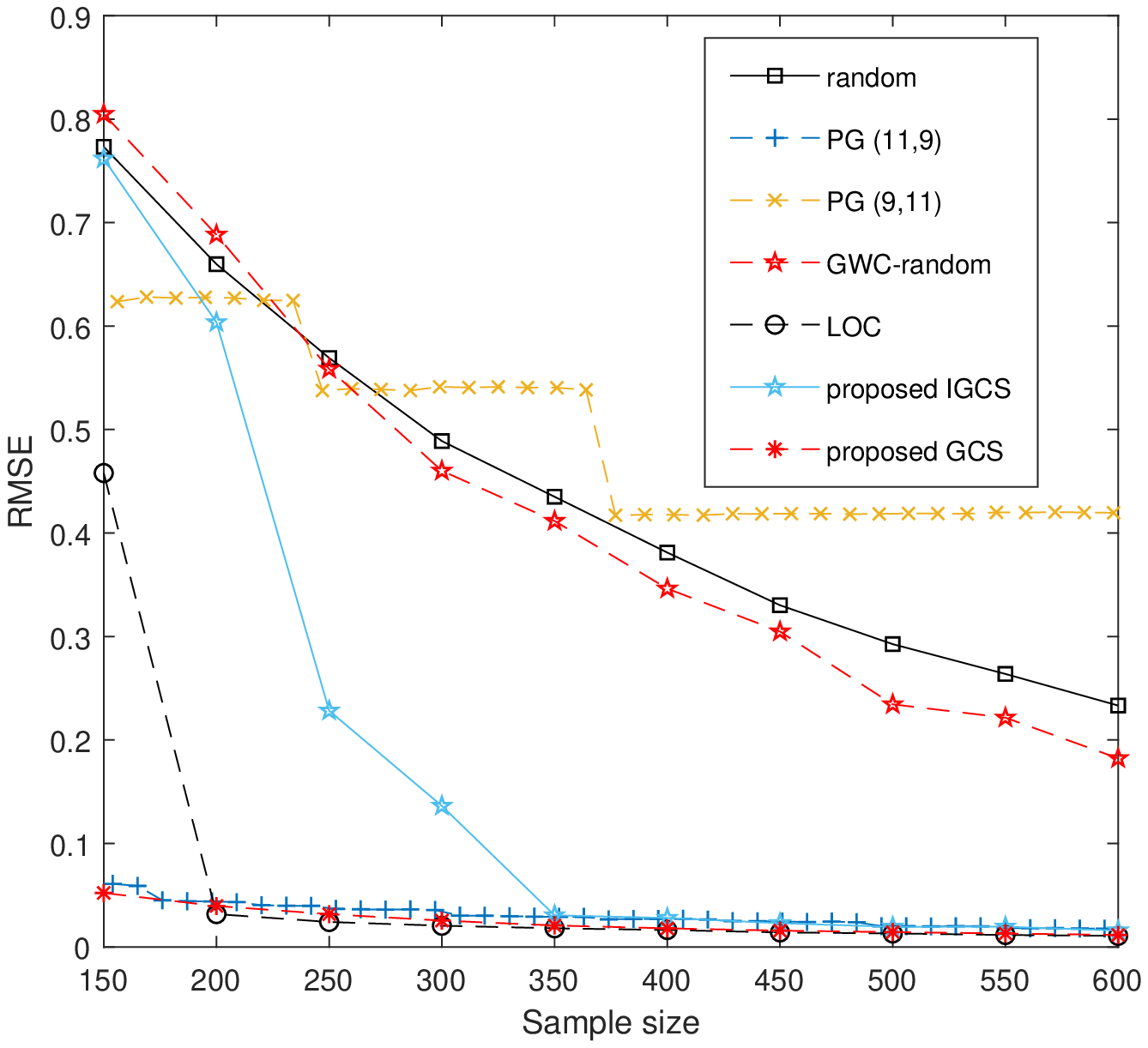}}
        \subfigure[RMSE on noisy signal with $\gamma=0.6$]{
    \includegraphics[width=150pt,height=150pt]{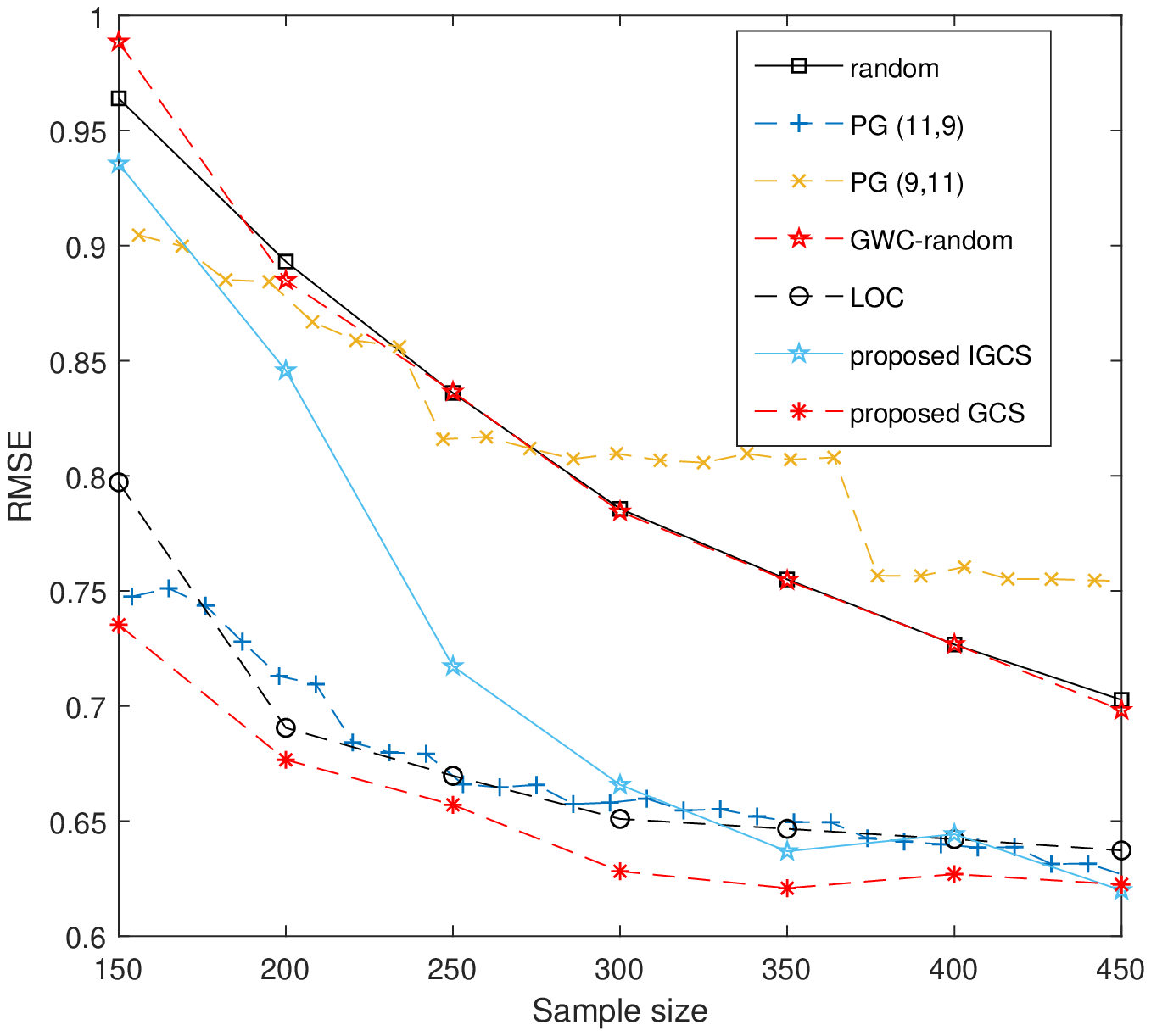}}
    \subfigure[RMSE on different noise level $\gamma$ with $K=150$]{
    \includegraphics[width=150pt,height=150pt]{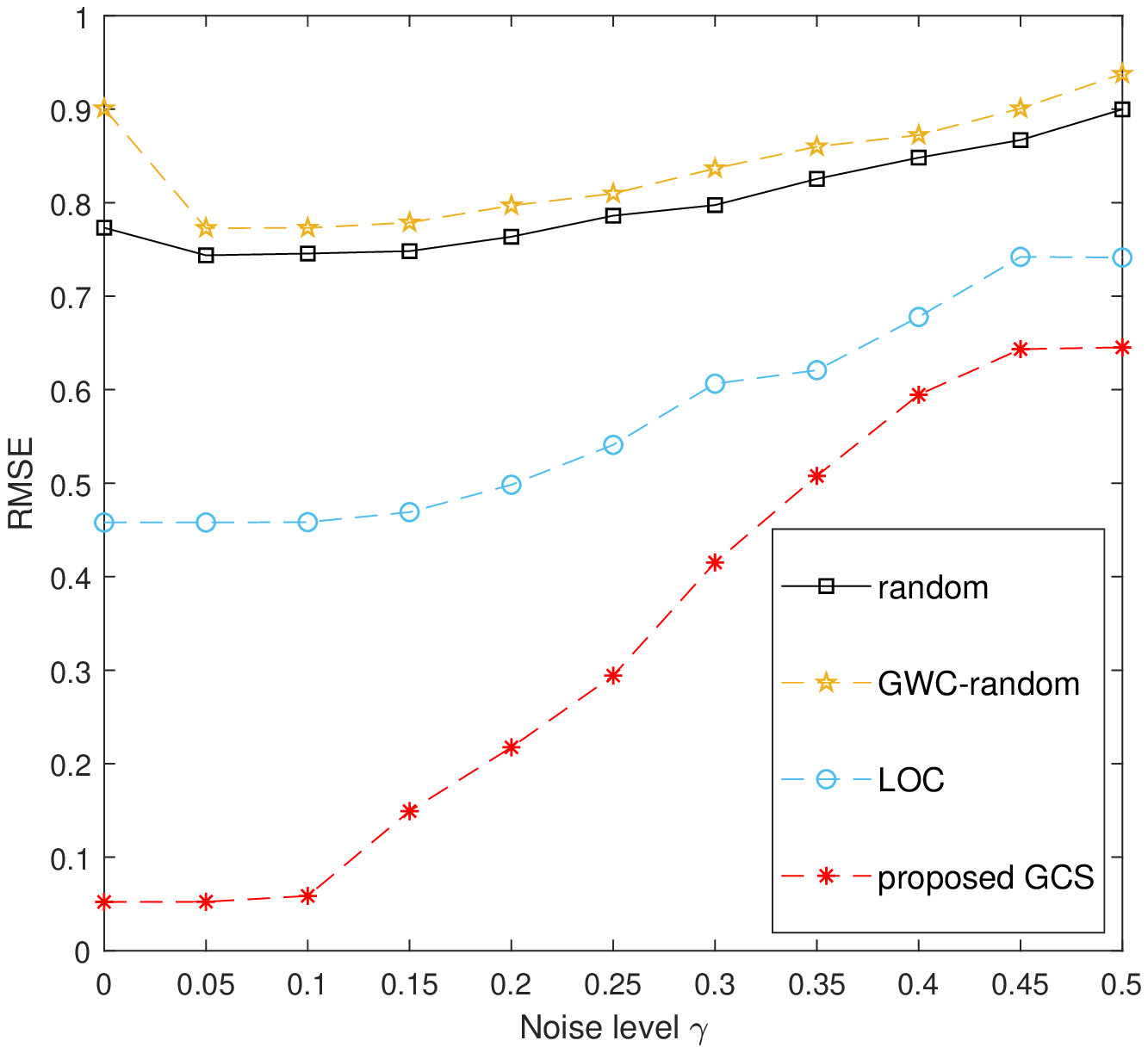}}
    \subfigure[\texttt{ML100K} ($100\times 200$)]{
    \includegraphics[width=150pt,height=150pt]{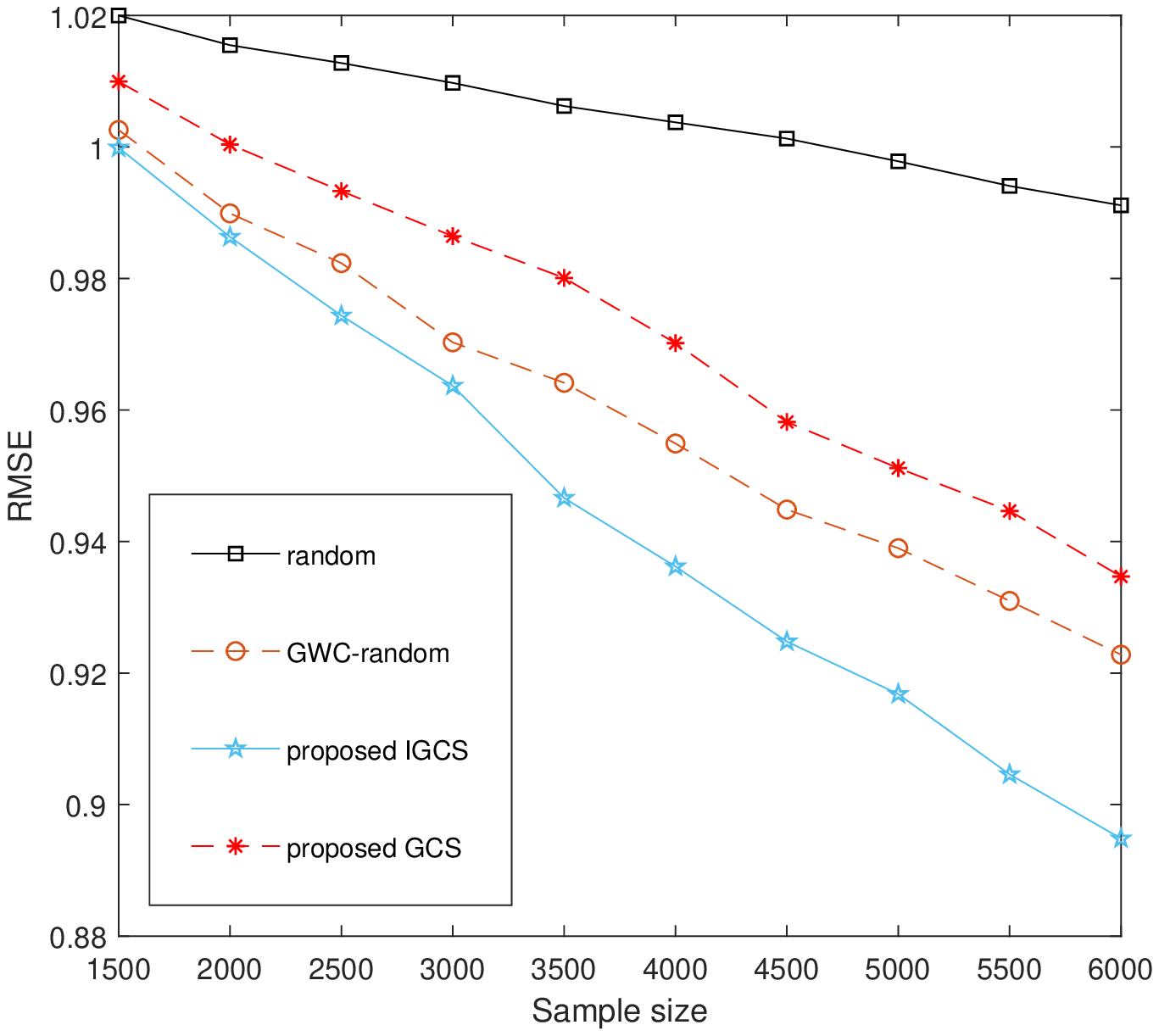}}
        \subfigure[\texttt{ML10M} ($100\times 200$)]{
    \includegraphics[width=150pt,height=150pt]{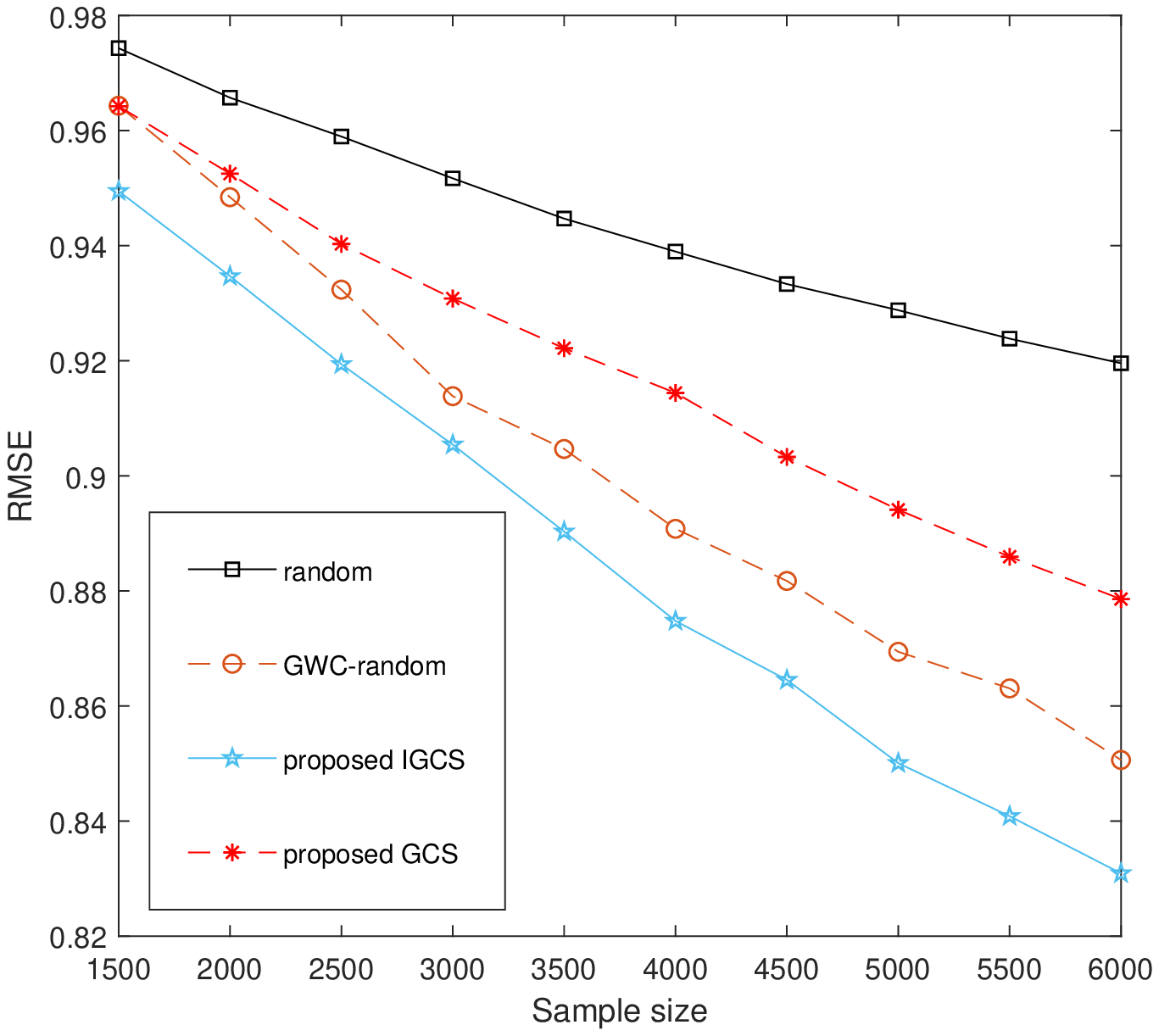}}
        \caption{RMSE of different sampling methods for MC on \texttt{Synthetic Netflix (SN)} \cite{EPFLgraph}, \texttt{ML100K} ($100\times 200$) and \texttt{ML10M} ($100\times 200$) datasets. The matrix is completed by dual smoothness based method \eqref{eq:obj0}.}
        \label{fig:small_data}
\end{figure*}

In this section, we present experimental results of our proposed sampling methods and other competing schemes, combined with several state-of-the-art MC strategies.
We list profiles of the simulated datasets in Table.\;\ref{tab:dataset}.

\subsection{Experimental Setup}
In all experiments, we set $\alpha=\beta=0.1$ for GCS and IGCS, and set $q=0.5$ for IGCS.
We implement five sampling methods for comparison, whose specific settings are as follows:
\begin{itemize}
    \item \textit{Graph weight coherence} (GWC-random) \cite{puy18}:  the `estimated' setting was used for computing the probability for random sampling without replacement. The bandwidth information was set to be 1000.
    \item \textit{Localized operator coverage} (LOC) \cite{akie2018eigenFREE}: the bandwidth prior was set to be 1000.
    \item \textit{Product Graph}-based sampling (PG)  \cite{PG}: the dual graph bandwidth  $(\eta_1,\eta_2)$ was set to be $(11,9)$ or $(9,11)$ for conducting structured sampling.
    \item LSS \cite{LLS}: we set rank to be 5 for implementing this method.
\end{itemize}

The last competing method is uniform random sampling. In the following, we list the details for all simulated MC methods:
\begin{itemize}
        \item IMC \cite{IMC_code} and SVT \cite{cai2010singular} were simulated with the same settings as reproducible codes.
    \item GMC \cite{EPFLgraph}: we set $\gamma_n=3$, $\gamma_r=\alpha=0.1$ and $\gamma_c=\beta=0.1$. For completion method in equation \eqref{eq:obj0}, we set $\gamma_n=0$ and keep other parameters the same.
    \item GRALS \cite{GRALS}: we set the rank to be 5  for GRALS, and its Laplacian matrix input were computed from $\L_h=\L_c+0.1\I_n$ and $\L_w=\L_r+0.1\I_m$, as used in \cite{PG}.
    \item GCMC \cite{berg2017graph}: the training epochs were set to be 1000.
    \item NMC \cite{NMC}: the number of training epochs was 10000.
\end{itemize}
\subsection{Performance on Small Datasets}
We first conduct experiments on small-size \texttt{Synthetic Netflix} datasets for performance comparison, where the matrix is completed by dual smoothness based method \eqref{eq:obj0}.
For our proposed GCS, LOBPCG is employed with warm start to compute the first eigenvector of matrix $\L\in\mathbb{R}^{20000}$, while the IGCS uses the MATLAB's inbuilt function (Krylov-Schur method) for eigen-decomposition since the factor graphs are small.
Two classical graph sampling methods GWC-random \cite{puy18} and LOC \cite{akie2018eigenFREE} are implemented on the product graph    $\L_p=\mathbf{I}_n\otimes\mathbf{L}_{r}+\mathbf{L}_{c}\otimes\mathbf{I}_m$ directly.
For structured method PG, with bandwidth input $(11,9)$ or $(9,11)$, we artificially increase the parameter $L=|\cL_1|+|\cL_2|$ to get rectangular output, where $\cL_1$ and $\cL_2$ are its selected row and column indices, respectively. After sampling, we record its sample size $|\cL_1|\times |\cL_2|$ and corresponding reconstruction error.
The root mean square error (RMSE) is  computed on unobserved entries in terms of ground-truth value for evaluation, as done in \cite{PG,anis14,tensor}.

Specific experimental results on noiseless and noisy \texttt{synthetic Netflix} dataset (noisy one is depicted in Fig.\;\ref{fig:small_data}\,(a)) in terms of sample size are shown in Fig.\;\ref{fig:small_data}\,(b) and (c).
We observe that our proposed GCS outperforms all competitors especially when the sampling budget is small and the matrix signal is noisy.
Though with bandwidth $(11,9)$, PG has comparable RMSE value, its performance deteriorates drastically by just changing the bandwidth to $(9,11)$.
{This means that PG is very sensitive to bandwidth settings.}
Further, PG cannot achieve arbitrary sample size due to its rigid sampling structure.

{Our proposed iterative sampling method IGCS also achieves good performance when sample size is relatively large with  complexity $\cO(K\hat{F}\max\{m,n\})$. Recall that the complexity of GCS is $\cO(KFmn)$ and LOC is $\cO(KJ)$, where $J$ is the number of non-zero entries in matrix $\L^d$.
We know that $J=\cO(d_{\max}mn)$, where $d_{\max}$ is the largest degree in product graph $\L$.
Hence IGCS has much lower complexity.}
Fig.\;\ref{fig:small_data}\,(d) further illustrates GCS's superiority for different noise levels, where we remove some inferior competitors for better visualization.


We also test sampling methods on the dense submatrix ($100\times 200$) from two real-world datasets \texttt{ML100K} and \texttt{ML10M}.
As done in \cite{EPFLgraph}, by assuming that the information outside this submatrix is given as prior, we construct content-based graph G2 via strategy described in Section  \ref{sec:graph-construction}.
Since the ground truth submatrix is sparse, the sampling method must be constrained to sample on the sparse entries.
PG, being a structured sampling strategy, cannot satisfy this requirement.
Further, LOC requires computation of a Chebyshev polynomial graph filter  before sampling, which always results in out-of-memory error using our constructed graph.
Thus, we show only executable sampling schemes for comparison.
The resulting RMSE is shown in Fig.\;\ref{fig:small_data}\,(e) and (f), which illustrate IGCS's superiority over GWC, GCS and random sampling for those constructed small real-world datasets.

 \subsection{Performance on Real-world Large Datasets}
To actively sample entries on large real-world datasets, both GCS and GWC are not applicable, since the size of the product graph $\L\in \mathbb{R}^{mn\times mn}$ can contain millions of nodes.
For the following real-world large datasets, we only test uniform random sampling and LSS \cite{LLS}  for performance comparison to our proposed IGCS.

$\bullet$ {\textbf{Simulations on Movielens 100K}}

We first deploy our proposed IGCS on a large real-world dataset \texttt{ML100K} \cite{movielens} to collect samples.
Since the typically used ratio between training and testing in \texttt{ML100K} is 80\% to 20\%, in our experiments, {we first randomly select
60K samples from 100K datasets as the initial available data, and then proceed to sample 20K from the rest 40K data pool based on our proposed IGCS or random sampling. The final un-selected 20K samples are used for computing RMSE. }
For this experiment, we use MATLAB's inbuilt function for eigen-decomposition in IGCS.

We create the feature-based graph G1 from   \texttt{ML100k}'s features and content-based graph G2 from 60K initial samples using the second method in Section \ref{sec:graph-construction}.
Average RMSE on different MC methods  and graphs are listed in Table.\;\ref{tab:movielens_100k1}, where the best performance number for each MC method is marked in boldface.
In the widely used feature-based graph (G1), our proposed IGCS (right side) achieves  better performance than random sampling (left side) for almost {all} popular MC methods.
Further, when we select entries on the content-based graph (G2), IGCS substantially outperforms random sampling.
Note that when G2 is used instead of G1, RMSE for random sampling is almost the same for every graph-based MC method.
Hence, we can conclude that the performance improvement using G2 is due to the more informative samples chosen using our proposed IGCS.

\begin{table}
    \centering
    \caption{RMSE for \texttt{ML100K} using random / IGCS sampling combined with different MC methods.
    Graph-based MC strategies are marked with  \checkmark.}
    \label{tab:movielens_100k1}
    \begin{tabular}{lcccc}
    \toprule
   MC methods&$\cG$? &G1&G2\\
    \midrule
    IMC \cite{IMC}&-&1.590 - \textbf{1.507}&1.590 - 1.600\\
    SVT \cite{cai2010singular}&-&{1.021} - 1.031 &1.021 - \textbf{0.983}\\
    GRALS \cite{GRALS}&\checkmark&0.947 - {0.931}&0.945 - \textbf{0.893}\\
    GMC \cite{EPFLgraph}&\checkmark&\textbf{1.036} - 1.037&1.118 - {1.054}\\
    GC-MC \cite{berg2017graph}&\checkmark&0.898 - {0.891}&0.899 - \textbf{0.858}\\
    NMC \cite{NMC}&-&0.892 - {0.887}&0.892 - \textbf{0.861}\\
    \bottomrule
    \end{tabular}
\end{table}

$\bullet$ {\textbf{Warm start parameter's effect on sampling time}}

In this experiment, we deploy IGCS on \texttt{ML100K} using different graphs and in combination with different MC methods.
The resulting RMSE values and sampling times are shown in Table.\;\ref{tab:movielens_100k2}, along with LSS for comparison.
``eigs'' means the eigen-decompostion in IGCS is computed using the Krylov-Schur method, while LOBPCG is used for different $\zeta$.
Note that LSS is essentially a random sampling with specified selection probability for each entry.
Thus we didn't record its running time.
All experiments are performed on a laptop with Intel Core i7-8750H and 16GB of RAM on Windows 10 for counting time.
In Table.\;\ref{tab:movielens_100k2}, the best performance numbers for each method are marked in boldface.
Table.\;\ref{tab:movielens_100k2} shows that IGCS is always superior to LSS using different state-of-the-art MC methods under different graphs.
It also shows that with increasing $\zeta$, execution time of IGCS with LOBPCG decreases substantially, while the performance become slightly worse.
Note that when $\zeta=1$, there is no warm start in LOBPCG.
Simulation results show that LOBPCG is more efficient than Krylov-Schur method for computing the first eigenvector and achieves better performance for MC.

$\bullet$ {\textbf{Simulations on other popular real-world datasets}}

We next evaluate IGCS on various well-known real-world datasets, combined with GRALS MC method.
Random sampling and LSS are simulated for comparison.
Since features for constructing G1 are not available for most datasets, we use G2 as the underlying graph for sampling.
For datasets \texttt{Flixter}, \texttt{YahooMusic}, \texttt{Douban}, random 90/10 training/test splits are used for simulations.
Specifically, we first choose 80\% entries in the given 90\%  training set as the initial samples to construct G2 and then use our IGCS method (or competing schemes) to sample entries to form a new 90\% training set.
RMSE is computed on final un-selected 10\% entries. 
For other datasets, we first randomly generate 90/10 training/test split and then use the above-mentioned procedure to collect samples and compute RMSE \footnote{
{For datasets \texttt{ML1M} and \texttt{ML10M}, the percentage of initial samples for constructing G2  is changed from 80\% into 90\%. }}.

\begin{table}
    \caption{RMSE and sampling time for IGCS with different $\zeta$'s on \texttt{ML100K}, along with LSS as comparison.}
    \centering
    \label{tab:movielens_100k2}
    \begin{scriptsize}
    \begin{tabular}{cccccccc}
    \toprule
       & MC &LSS&eigs &$\zeta=1$&$\zeta=3$&$\zeta=5$&$\zeta=7$ \\\midrule
      \multirow{4}*{G1}&
      GRALS&0.962&0.931&
      \textbf{0.927}&0.935& 0.934&0.931\\
      ~&GC-MC&0.910&0.896&
      \textbf{0.889}&0.895&0.897&0.891
\\
      ~&NMC&0.891&0.907&
      \textbf{0.880}&0.888&0.889&0.886
\\
\cmidrule(lr){2-8}
 ~&Time ($10^3$s)&-&1.975& 
 1.104&0.503& 0.375& 0.320\\\midrule    
        \multirow{4}*{G2}&   
        GRALS&0.958&0.889&
    0.871  &   \textbf{0.870}  &  0.882 &  0.882\\
    ~&GC-MC&0.909&0.860&
    \textbf{0.839}&	0.840&	0.847	&	0.851\\
      ~&NMC&0.907&0.858&
      \textbf{0.840}&0.845&	0.843&	0.852
\\
      \cmidrule(lr){2-8}

 ~&Time ($10^3$s)&-&1.278&
 1.216&0.573& 0.441& 0.388\\\bottomrule
    \end{tabular}
    \label{IGCS_k}
    \end{scriptsize}
\end{table}

\begin{table}[tp]
    \caption{RMSE of the proposed IGCS on different datasets with different $\zeta$'s, along with random sampling and LSS for comparison. The MC method is GRALS.}
    \centering
    \label{tab:datasets}
    \begin{scriptsize}
\begin{tabular}{lcccccc}
\toprule
 dataset                   & random &LSS& $\zeta=1$    & $\zeta=3$    & $\zeta=5$    & $\zeta=7$    \\\midrule
\texttt{Flixster}                   & 1.029 &1.207 & \bf{0.932}  & 1.057  & 1.046  & 1.045    \\
\texttt{Douban}                     & 0.744  &0.750& \bf{0.715}  & 0.720  & 0.736   & 0.730   \\
\texttt{YahooMusic}                   & 96.987&125.0 & 59.172 & \bf{44.546} & 52.391 & 47.082 \\
\texttt{ML1M}                     & 0.905  &0.930& \bf{0.829}  & 0.833  & 0.835  & 0.838   \\
\texttt{Book-Crossing}                        & 3.987&5.095  & \bf{3.578}  & 3.704  & 3.804  & 4.185  \\
\texttt{ML10M}                    & 0.706&0.777  & \bf{0.655}  & 0.656  & 0.656  & 0.656    \\
\texttt{Jester}                     & 0.214 &0.217 & \bf{0.160}  & 0.162  & 0.162  & 0.165    \\
\texttt{FilmTrust}                  & 0.820  &0.941& \bf{0.668}  & 0.735  & 0.711  & 0.742  \\\bottomrule
\end{tabular}
    \end{scriptsize}
\end{table}

Experimental RMSEs of different sampling methods  are shown in Table.\;\ref{tab:datasets}.
Table.\;\ref{tab:datasets} shows that IGCS outperforms random sampling and LSS in all datasets, which have various data size, density and rating level.
Moreover, when the warm start parameter $\zeta$ in IGCS becomes larger, RMSE of IGCS deteriorates only slightly, but still significantly outperforms the competitors for almost all datasets.

\section{Conclusion}
\label{sec:conclude}
Pre-selection of entries for matrix completion is an important but under-addressed problem. 
In this paper, we propose a graph sampling strategy for matrix completion based on recurrent Gershgorin disc shift. 
Specifically, assuming that the target matrix signal is smooth with respect to dual graphs, we can complete the matrix via partial observations by solving a system of linear equations.
To maximize the stability of the linear system, we select samples to maximize the smallest eigenvalue $\lambda_{\min}$ of the coefficient matrix, which is equivalent to minimize the upper-bound of the reconstructed error.  
We tackle the formulated sampling objective with a greedy scheme to select one sample at a time (equivalent to shifting one Gershorin disc).  
To achieve fast sampling, inspired by one corollary of the Gershgorin circle theorem, we select the node corresponding to the largest energy in the first eigenvector of the incremented Laplacian matrix. 
We employ LOBPCG to compute the first eigenvector of an incremented Laplacian matrix, which benefits from warm start as the first eigenvectors are computed repeatedly. 
To efficently sample large real-world datasets, we further devise a block-wise graph sampling scheme, where the samples are collected alternately between blocks in two separate block-diagonal matrices.  
Extensive experiments have validated the superiority of our proposed graph sampling method for matrix completion, compared with other graph sampling and active matrix completion methods, in different datasets, under different graphs, and in combination with different popular matrix completion methods.

\section*{Acknowledgement}
The authors would like to thank the great help from Wes Eardley for investigating available datasets for simulations.

\appendices

\section{Derivations of linear equation}
\label{solution}
We first compute the derivative of $f(\X)$ with respect to the optimization variable $\mathbf{X}$:
\begin{equation}
\frac{\partial f(\mathbf{X})}{\partial\mathbf{X}}=\mathbf{A}_{\Omega}\circ(\mathbf{X}-\mathbf{Y})
+\alpha\mathbf{L}_{r}\mathbf{X}
+\beta\mathbf{X}\mathbf{L}_{c}
\end{equation}
whose vector form by using the property $\text{vec}(\A\B)=(\I_n\otimes\A)\text{vec}(\B)=(\B^{\top}\otimes\I_m)\text{vec}(\A)$ is: 
\begin{align}
&
\frac{\partial f[\text{vec}(\mathbf{X})]}{\partial[\text{vec}(\mathbf{X})]}
\\
&=
\left(\tilde{\mathbf{A}}_{\Omega}+\alpha\mathbf{I}_n\otimes\mathbf{L}_{r}
+\beta\mathbf{L}_{c}\otimes\mathbf{I}_m\right)\text{vec}(\mathbf{X})
-\text{vec}(\mathbf{Y})\nonumber
\end{align}
Note that $\mathbf{A}_{\Omega}\circ\mathbf{A}_{\Omega}=\mathbf{A}_{\Omega}$ and $\Y=\A_{\Omega}\circ(\X+\N)$, so $\text{vec}(\mathbf{A}_{\Omega}\circ\mathbf{Y})=\text{vec}(\Y)$.  
To obtain an optimal solution, we set ${\partial f(\mathbf{X})}/{\partial\mathbf{X}}=\0$, which in vector form leads to 
\begin{equation}
    \left(\tilde{\mathbf{A}}_{\Omega}+\alpha\mathbf{I}_n\otimes\mathbf{L}_{r}
+\beta\mathbf{L}_{c}\otimes\mathbf{I}_m\right)\text{vec}(\mathbf{X}^{*})=\text{vec}(\mathbf{Y})
\label{eq:app-linear}
\end{equation}

\section{The property of matrix $\Q$}
\label{Q}
 Note that $\lambda_{\min}(\mathbf{I}_n\otimes\mathbf{L}_{r})=0$ since $\lambda_{\min}(\L_r)=0$ and  $\lambda_{\min}(\mathbf{L}_{c}\otimes\mathbf{I}_m)=\lambda_{\min}(\mathbf{I}_m\otimes\mathbf{L}_{c})=0$, so $\lambda_{\min}$ of $\alpha\mathbf{I}_n\otimes\mathbf{L}_{r}
+\beta\mathbf{L}_{c}\otimes\mathbf{I}_m$ is at least 0 based on Weyl's inequality on eigenvalues that $\lambda_{\min}(\A+\B)\ge\lambda_{\min}(\A)+\lambda_{\min}(\B)$ \cite{neumannseries}.
Moreover, the vectorized sampling operator $\tilde{\A}_{\Omega}$ is positive semi-definite (PSD). 
If matrix $\Q$ is invertible with enough samples in matrix $\tilde{\A}_{\Omega}$, we know that $\Q=\tilde{\A}_{\Omega}+\alpha\mathbf{I}_n\otimes\mathbf{L}_{r}
+\beta\mathbf{L}_{c}\otimes\mathbf{I}_m$ is sparese, symmetric and positive definite (PD), and the optimal solution to problem \eqref{eq:app-linear} in closed form is :
\begin{equation}
        \text{vec}(\mathbf{X}^{*})=\left(\tilde{\mathbf{A}}_{\Omega}+\alpha\mathbf{I}_n\otimes\mathbf{L}_{r}
+\beta\mathbf{L}_{c}\otimes\mathbf{I}_m\right)^{-1}\text{vec}(\mathbf{Y})
\label{eq:app-solution}
\end{equation}

\section{The upper-bound of $\lambda_{\max}(\Q)$}
\label{upper-bound}
Reusing the notations in Section\;\ref{sec:sampling1}, let $\L=\alpha\I_n\otimes\L_r+\beta\L_c\otimes\I_m$, and $\Q=\tilde{\A}_{\Omega}+\L$. 
Specifically, 
\begin{align}
&   \L= \alpha
\left[ \begin{array}{cccc}
\L_r & 0 & \cdots & 0 \\
0 & \L_r &  & \vdots \\
\vdots & & \ddots &0\\
0 & \cdots & \cdots&\L_r
\end{array} \right]\\
&+\beta
\left[ \begin{array}{cccc}
 \mathbf{L}_c(1,1)\mathbf{I}_m & \mathbf{L}_c(1,2)\mathbf{I}_m & \cdots & \mathbf{L}_c(1,n) \mathbf{I}_m \\
 \mathbf{L}_c(2,1)\mathbf{I}_m & \mathbf{L}_c(2,2)\mathbf{I}_m & \cdots & \mathbf{L}_c(2,n) \mathbf{I}_m \\
 \vdots & \vdots & \ddots & \vdots \\
\mathbf{L}_c(n,1)\mathbf{I}_m & \mathbf{L}_c(n,2)\mathbf{I}_m & \cdots & \mathbf{L}_c(n,n)\mathbf{I}_m
 \end{array} \right].\nonumber
\end{align}

For $k=i+m\times(j-1)$ with $\forall i\in \cV_r$ and $ j\in \cV_c$, the $k$-th row of matrix $\L$ (denoted by $\s_k$) is a combination of the $i$-th row of $\L_r$ (denoted by $\v_i$) and the $j$-th row of $\L_c$ (denoted by $\t_j$). 
It is easy to see that $\s_k(k)=\alpha\v_i(i)+\beta\t_j(j)$ and $\sum^{mn}_{l=1;l\neq k}|\s_k(l)|=\alpha\sum^m_{l=1;l\neq i}|\v_i(l)|+\beta\sum^n_{l=1;l\neq j}|\t_j(l)|$. Since $\v_i(i)=\sum^m_{l=1;l\neq i}|\v_i(l)|$ and $\t_j(j)=\sum^n_{l=1;l\neq j}|\t_j(l)|$ by the definitions of combinatorial Laplacian matrix $\L_r$ and $\L_c$, we know that $\s_k(k)=\sum^{mn}_{l=1;l\neq k}|\s_k(l)|,\forall k\in\{1,2,\dots,mn\}$. Based on the Gershgorin circle theorem presented in Section \ref{gershgorin-sec}, we know that the eigenvalues of matrix $\L$ are all bounded in [0,$2\max_k\{\s_k(k)\}$], where 0 is the lower bound of all left ends of Gershgorin discs, and $2\max_k\{\s_k(k)\}$ is the upper bound of all right ends of those discs. 
Note that $\v_i(i)=\D_r(i,i)$ and  $\t_j(j)=\D_c(j,j)$ for connected graph without self-loop. 
Assuming $\max_i\{\D_r(i,i)\}\le d_r$ and $\max_j\{\D_c(j,j)\}\le d_c$ (degree constrained graphs), we will have 
\begin{align}
  &  \max_k\{\s_k(k)\}=\alpha\max_i\{\v_i(i)\}+\beta\max_j\{\t_j(j)\}\\
  &=\alpha\max_i\{\D_r(i,i)\}+\beta\max_j\{\D_c(j,j)\}\le \alpha d_r+\beta d_c\nonumber
\end{align}

Therefore, $\lambda_{\max}(\L)$ will be upper-bounded by $2\alpha d_r+2\beta d_c$. Because $\tilde{\A}_{\Omega}$ is a diagonal matrix with diagonal entries 0 or 1, $\lambda_{\max}(\Q)$ will be  upper-bounded by $2\alpha d_r+2\beta d_c+1$ if the two factor graphs are degree-bounded by $d_r$ and $d_c$ respectively.

\bibliographystyle{IEEEtran}
\bibliography{main}

\begin{thebibliography}{10}
\providecommand{\url}[1]{#1}
\csname url@samestyle\endcsname
\providecommand{\newblock}{\relax}
\providecommand{\bibinfo}[2]{#2}
\providecommand{\BIBentrySTDinterwordspacing}{\spaceskip=0pt\relax}
\providecommand{\BIBentryALTinterwordstretchfactor}{4}
\providecommand{\BIBentryALTinterwordspacing}{\spaceskip=\fontdimen2\font plus
\BIBentryALTinterwordstretchfactor\fontdimen3\font minus
  \fontdimen4\font\relax}
\providecommand{\BIBforeignlanguage}[2]{{%
\expandafter\ifx\csname l@#1\endcsname\relax
\typeout{** WARNING: IEEEtran.bst: No hyphenation pattern has been}%
\typeout{** loaded for the language `#1'. Using the pattern for}%
\typeout{** the default language instead.}%
\else
\language=\csname l@#1\endcsname
\fi
#2}}
\providecommand{\BIBdecl}{\relax}
\BIBdecl

\bibitem{liu2017MC}
G.~Liu, Q.~Liu, and X.~Yuan, ``A new theory for matrix completion,'' in
  \emph{Advances in Neural Information Processing Systems}, 2017, pp. 785--794.

\bibitem{candes2009exact}
E.~J. Cand{\`e}s and B.~Recht, ``Exact matrix completion via convex
  optimization,'' \emph{Foundations of Computational mathematics}, vol.~9,
  no.~6, p. 717, 2009.

\bibitem{EPFLgraph}
V.~Kalofolias, X.~Bresson, M.~Bronstein, and P.~Vandergheynst, ``Matrix
  completion on graphs,'' \emph{arXiv preprint arXiv:1408.1717}, 2014.

\bibitem{candes10}
E.~Candes and Y.~Plan, ``Matrix completion with noise,'' in \emph{Proceedings
  of the IEEE}, vol. 98, no.6, April 2010, pp. 925--936.

\bibitem{rubens2015active}
N.~Rubens, M.~Elahi, M.~Sugiyama, and D.~Kaplan, ``Active learning in
  recommender systems,'' in \emph{Recommender systems handbook}.\hskip 1em plus
  0.5em minus 0.4em\relax Springer, 2015, pp. 809--846.

\bibitem{2013activeMC}
S.~Chakraborty, J.~Zhou, V.~Balasubramanian, S.~Panchanathan, I.~Davidson, and
  J.~Ye, ``Active matrix completion,'' in \emph{2013 IEEE 13th International
  Conference on Data Mining}.\hskip 1em plus 0.5em minus 0.4em\relax IEEE,
  2013, pp. 81--90.

\bibitem{adaptive-sampling}
A.~Krishnamurthy and A.~Singh, ``Low-rank matrix and tensor completion via
  adaptive sampling,'' in \emph{Advances in Neural Information Processing
  Systems}, 2013, pp. 836--844.

\bibitem{huang2018active}
S.-J. Huang, M.~Xu, M.-K. Xie, M.~Sugiyama, G.~Niu, and S.~Chen, ``Active
  feature acquisition with supervised matrix completion,'' in \emph{Proceedings
  of the 24th ACM SIGKDD International Conference on Knowledge Discovery \&
  Data Mining}.\hskip 1em plus 0.5em minus 0.4em\relax ACM, 2018, pp.
  1571--1579.

\bibitem{LLS}
Y.~Chen, S.~Bhojanapalli, S.~Sanghavi, and R.~Ward, ``Completing any low-rank
  matrix, provably,'' \emph{The Journal of Machine Learning Research}, vol.~16,
  no.~1, pp. 2999--3034, 2015.

\bibitem{ortega18ieee}
A.~Ortega, P.~Frossard, J.~Kovacevic, J.~M.~F. Moura, and P.~Vandergheynst,
  ``Graph signal processing: Overview, challenges, and applications,'' in
  \emph{Proceedings of the {IEEE}}, vol. 106, no.5, May 2018, pp. 808--828.

\bibitem{huang2018rating}
W.~Huang, A.~G. Marques, and A.~R. Ribeiro, ``Rating prediction via graph
  signal processing,'' \emph{IEEE Transactions on Signal Processing}, vol.~66,
  no.~19, pp. 5066--5081, 2018.

\bibitem{PG}
G.~Ortiz-Jim{\'e}nez, M.~Coutino, S.~P. Chepuri, and G.~Leus, ``Sampling and
  reconstruction of signals on product graphs,'' in \emph{2018 IEEE Global
  Conference on Signal and Information Processing (GlobalSIP)}.\hskip 1em plus
  0.5em minus 0.4em\relax IEEE, 2018, pp. 713--717.

\bibitem{tensor}
------, ``Sparse sampling for inverse problems with tensors,'' \emph{IEEE
  Transactions on Signal Processing}, vol.~67, no.~12, pp. 3272--3286, 2019.

\bibitem{anis16}
A.~Anis, A.~Gadde, and A.~Ortega, ``Efficient sampling set selection for
  bandlimited graph signals using graph spectral proxies,'' in \emph{IEEE
  Transactions on Signal Processing}, vol. 64, no.14, July 2016, pp.
  3775--3789.

\bibitem{bishop1995training}
C.~M. Bishop, ``Training with noise is equivalent to tikhonov regularization,''
  \emph{Neural computation}, vol.~7, no.~1, pp. 108--116, 1995.

\bibitem{monti2017geometric}
F.~Monti, M.~Bronstein, and X.~Bresson, ``Geometric matrix completion with
  recurrent multi-graph neural networks,'' in \emph{Advances in Neural
  Information Processing Systems}, 2017, pp. 3697--3707.

\bibitem{CG}
M.~F. M{\o}ller, ``A scaled conjugate gradient algorithm for fast supervised
  learning,'' \emph{Neural networks}, vol.~6, no.~4, pp. 525--533, 1993.

\bibitem{chen15sampling}
S.~Chen, R.~Varma, A.~Sandryhaila, and J.~Kovacevic, ``Discrete signal
  processing on graphs: Sampling theory,'' in \emph{IEEE Transactions on Signal
  Processing}, vol. 63, no.24, Decemberr 2015, pp. 6510--6523.

\bibitem{experiments}
F.~Pukelsheim, \emph{Optimal design of experiments}.\hskip 1em plus 0.5em minus
  0.4em\relax siam, 1993, vol.~50.

\bibitem{Boyd}
S.~Boyd and L.~Vandenberghe, \emph{Convex optimization}.\hskip 1em plus 0.5em
  minus 0.4em\relax Cambridge university press, 2004.

\bibitem{yuji}
Y.~Nakatsukasa, N.~Saito, and E.~Woei, ``Mysteries around the graph laplacian
  eigenvalue 4,'' \emph{Linear Algebra and its Applications}, vol. 438, no.~8,
  pp. 3231--3246, 2013.

\bibitem{lobpcg}
A.~V. Knyazev, ``Toward the optimal preconditioned eigensolver: Locally optimal
  block preconditioned conjugate gradient method,'' \emph{SIAM journal on
  scientific computing}, vol.~23, no.~2, pp. 517--541, 2001.

\bibitem{akie2018eigenFREE}
A.~Sakiyama, Y.~Tanaka, T.~Tanaka, and A.~Ortega, ``Eigendecomposition-free
  sampling set selection for graph signals,'' \emph{IEEE Transactions on Signal
  Processing}, vol.~67, no.~10, pp. 2679--2692, 2019.

\bibitem{puy18}
G.~Puy, N.~Tremblay, R.~Gribonval, and P.~Vandergheynst, ``Random sampling of
  bandlimited signals on graphs,'' in \emph{Applied and Computational Harmonic
  Analysis}, vol. 44, no.2, March 2018, pp. 446--473.

\bibitem{GRALS}
N.~Rao, H.-F. Yu, P.~K. Ravikumar, and I.~S. Dhillon, ``Collaborative filtering
  with graph information: Consistency and scalable methods,'' in \emph{Advances
  in Neural Information Processing Systems}, 2015, pp. 2107--2115.

\bibitem{berg2017graph}
R.~v.~d. Berg, T.~N. Kipf, and M.~Welling, ``Graph convolutional matrix
  completion,'' \emph{arXiv preprint arXiv:1706.02263}, 2017.

\bibitem{NMC}
D.~M. Nguyen, E.~Tsiligianni, and N.~Deligiannis, ``Extendable neural matrix
  completion,'' in \emph{2018 IEEE International Conference on Acoustics,
  Speech and Signal Processing (ICASSP)}.\hskip 1em plus 0.5em minus
  0.4em\relax IEEE, 2018, pp. 6328--6332.

\bibitem{pesenson08}
I.~Pesenson, ``Sampling in paley-wiener spaces on combinatorial graphs,'' in
  \emph{Transactions of the American Mathematical Society}, vol. 360, no.10,
  2008, p. 5603–5627.

\bibitem{anis14}
A.~Anis, A.~Gadde, and A.~Ortega, ``Towards a sampling theorem for signals on
  arbitrary graphs,'' in \emph{IEEE International Conference on Acoustics,
  Speech and Signal Processing}, Florence, Italy, May 2014.

\bibitem{wang18}
F.~Wang, Y.~Wang, and G.~Cheung, ``A-optimal sampling and robust reconstruction
  for graph signals via truncated {Neumann} series,'' in \emph{IEEE Signal
  Processing Letters}, vol. 25, no.5, May 2018, pp. 680--684.

\bibitem{shomorony14}
H.~Shomorony and A.~S. Avestimehr, ``Sampling large data on graphs,'' in
  \emph{IEEE Global Conference on Signal and Information Processing
  (GlobalSIP)}, Atlanta, GA, December 2014.

\bibitem{chamon2018}
L.~F. Chamon and A.~Ribeiro, ``Greedy sampling of graph signals,'' \emph{IEEE
  Transactions on Signal Processing}, vol.~66, no.~1, pp. 34--47, 2018.

\bibitem{MFN}
M.~Tsitsvero, S.~Barbarossa, and P.~Di~Lorenzo, ``Signals on graphs:
  Uncertainty principle and sampling,'' \emph{IEEE Transactions on Signal
  Processing}, vol.~64, no.~18, pp. 4845--4860, 2016.

\bibitem{yuanchao2019ICASSP}
Y.~Bai, G.~Cheung, F.~Wang, X.~Liu, and W.~Gao, ``Reconstruction-cognizant
  graph sampling using gershgorin disc alignment,'' in \emph{ICASSP 2019-2019
  IEEE International Conference on Acoustics, Speech and Signal Processing
  (ICASSP)}.\hskip 1em plus 0.5em minus 0.4em\relax IEEE, 2019, pp. 5396--5400.

\bibitem{activeMC2015}
N.~Rubens, M.~Elahi, M.~Sugiyama, and D.~Kaplan, ``Active learning in
  recommender systems,'' in \emph{Recommender systems handbook}.\hskip 1em plus
  0.5em minus 0.4em\relax Springer, 2015, pp. 809--846.

\bibitem{ruchansky2015matrix}
N.~Ruchansky, M.~Crovella, and E.~Terzi, ``Matrix completion with queries,'' in
  \emph{Proceedings of the 21th ACM SIGKDD International Conference on
  Knowledge Discovery and Data Mining}.\hskip 1em plus 0.5em minus 0.4em\relax
  ACM, 2015, pp. 1025--1034.

\bibitem{mak2017active}
S.~Mak and Y.~Xie, ``Active matrix completion with uncertainty
  quantification,'' \emph{arXiv preprint arXiv:1706.08037}, 2017.

\bibitem{shuman13}
D.~I. Shuman, S.~K. Narang, P.~Frossard, A.~Ortega, and P.~Vandergheynst, ``The
  emerging field of signal processing on graphs: Extending high-dimensional
  data analysis to networks and other irregular domains,'' in \emph{IEEE Signal
  Processing Magazine}, vol. 30, no.3, May 2013, pp. 83--98.

\bibitem{huang2018matrix}
W.~Huang, A.~G. Marques, and A.~Ribeiro, ``Matrix completion via graph signal
  processing,'' in \emph{43rd IEEE Int. Conf. Acoust., Speech and Signal
  Process., Calgary, AB}, 2018, pp. 15--20.

\bibitem{hu2019feature}
W.~Hu, X.~Gao, G.~Cheung, and Z.~Guo, ``Feature graph learning for 3d point
  cloud denoising,'' \emph{arXiv preprint arXiv:1907.09138}, 2019.

\bibitem{jiahao2017graph}
J.~Pang and G.~Cheung, ``Graph laplacian regularization for image denoising:
  Analysis in the continuous domain,'' \emph{IEEE Transactions on Image
  Processing}, vol.~26, no.~4, pp. 1770--1785, 2017.

\bibitem{bai19tip}
Y.~Bai, G.~Cheung, X.~Liu, and W.~Gao, ``Graph-based blind image deblurring
  from a single photograph,'' in \emph{IEEE Transactions on Image Processing},
  vol. 28, no.3, March 2019, pp. 1404--1418.

\bibitem{zeng20183d}
J.~Zeng, G.~Cheung, M.~Ng, J.~Pang, and C.~Yang, ``3d point cloud denoising
  using graph laplacian regularization of a low dimensional manifold model,''
  \emph{arXiv preprint arXiv:1803.07252}, 2018.

\bibitem{dinesh20183d}
C.~Dinesh, G.~Cheung, and I.~V. Bajic, ``3d point cloud denoising via bipartite
  graph approximation and reweighted graph laplacian,'' \emph{arXiv preprint
  arXiv:1812.07711}, 2018.

\bibitem{conjugatedGradient}
J.~R. Shewchuk \emph{et~al.}, ``An introduction to the conjugate gradient
  method without the agonizing pain,'' 1994.

\bibitem{yu2006active}
K.~Yu, J.~Bi, and V.~Tresp, ``Active learning via transductive experimental
  design,'' in \emph{Proceedings of the 23rd international conference on
  Machine learning}.\hskip 1em plus 0.5em minus 0.4em\relax ACM, 2006, pp.
  1081--1088.

\bibitem{MPME2016TSP}
C.~Jiang, Y.~C. Soh, and H.~Li, ``Sensor placement by maximal projection on
  minimum eigenspace for linear inverse problems,'' \emph{IEEE Transactions on
  Signal Processing}, vol.~64, no.~21, pp. 5595--5610, 2016.

\bibitem{pukelsheim1993optimal}
F.~Pukelsheim, W.~J. Studden \emph{et~al.}, ``E-optimal designs for polynomial
  regression,'' \emph{The Annals of Statistics}, vol.~21, no.~1, pp. 402--415,
  1993.

\bibitem{neumannseries}
R.~A. Horn, R.~A. Horn, and C.~R. Johnson, \emph{Matrix analysis}.\hskip 1em
  plus 0.5em minus 0.4em\relax Cambridge university press, 1990.

\bibitem{Q2perm}
H.~V. Henderson and S.~R. Searle, ``The vec-permutation matrix, the vec
  operator and kronecker products: A review,'' \emph{Linear and multilinear
  algebra}, vol.~9, no.~4, pp. 271--288, 1981.

\bibitem{gspbox}
N.~Perraudin, J.~Paratte, D.~I. Shuman, V.~Kalofolias, P.~Vandergheynst, and
  D.~K. Hammond, ``{GSPBOX:} {A} toolbox for signal processing on graphs,''
  \emph{arXiv preprint arXiv:1408.5781}, 2014.

\bibitem{movielens}
``Movielens,'' \url{https://grouplens.org/datasets/movielens/}, accessed:
  2019-5-23.

\bibitem{book_crossing}
``Book-crossing dataset,''
  \url{http://www2.informatik.uni-freiburg.de/~cziegler/BX/}, accessed:
  2019-5-23.

\bibitem{Jester}
``Jester dataset,'' \url{https://grouplens.org/datasets/jester/}, accessed:
  2019-5-23.

\bibitem{Filmtrust}
``Filmtrust dataset,'' \url{https://www.librec.net/datasets.html}, accessed:
  2019-5-23.

\bibitem{IMC_code}
``{IMC} code,'' \url{http://lamda.nju.edu.cn/xum/code/Maxide.zip}, accessed:
  2019-5-23.

\bibitem{cai2010singular}
J.-F. Cai, E.~J. Cand{\`e}s, and Z.~Shen, ``A singular value thresholding
  algorithm for matrix completion,'' \emph{SIAM Journal on Optimization},
  vol.~20, no.~4, pp. 1956--1982, 2010.

\bibitem{IMC}
M.~Xu, R.~Jin, and Z.-H. Zhou, ``Speedup matrix completion with side
  information: Application to multi-label learning,'' in \emph{Advances in
  Neural Information Processing Systems}, 2013, pp. 2301--2309.

\end{thebibliography}

\end{document}